\newcolumntype{M}[1]{>{\centering\arraybackslash}m{#1}}
\def\BibTeX{{\rm B\kern-.05em{\sc i\kern-.025em b}\kern-.08em
    T\kern-.1667em\lower.7ex\hbox{E}\kern-.125emX}}
\newcommand{\Pt}{ \widetilde{P}}
\newcommand{\pt}{ \widetilde{p}}
\newcommand{\Yt}{ \widetilde{Y}}
\newcommand{\yt}{ \widetilde{y}}
\newcommand{\Mt}{M_{\Lambda}}
\newcommand{\Mo}{M_{\Omega}}
\newcommand{\Rt}{ R_\Lambda}
\newcommand{\mt}{ \widetilde{m}}
\newcommand{\Pds}{ \mathds{P}}
\newcommand{\Eds}{ \mathds{E}}
\newcommand{\eye}{ \mathds{1}}
\newcommand{\tht}{\hat{\theta}}
\newtheorem{claim}{Claim}
\begin{document}
\title{A theoretical framework for self-supervised MR image reconstruction using sub-sampling via variable density Noisier2Noise}
\author{Charles Millard and Mark Chiew %, \IEEEmembership{Member, IEEE}
\thanks{This work was supported in part by the Engineering and Physical Sciences Research Council, grant EP/T013133/1, by the Royal Academy of Engineering, grant RF201617/16/23, and by the Wellcome Trust, grant 203139/Z/16/Z. The computational aspects of this research were supported by the Wellcome Trust Core Award Grant Number 203141/Z/16/Z and the NIHR Oxford BRC. The views expressed are those of the authors and not necessarily those of the NHS, the NIHR or the Department of Health. This research was undertaken, in part, thanks to funding from the Canada Research Chairs Program.}
\thanks{Charles Millard and Mark Chiew are with the Wellcome Centre for Integrative Neuroimaging, FMRIB, Nuffield Department of Clinical Neurosciences, University of Oxford, Level 0, John Radcliffe Hospital, Oxford, OX3 9DU, UK. Mark Chiew is also with the Department of Medical Biophysics, University of Toronto, Toronto, Canada and Physical Sciences, Sunnybrook Research Institute, Toronto, Canada (email: charles.millard@ndcn.ox.ac.uk and mark.chiew@utoronto.ca).}}

\markboth{IEEE TRANSACTIONS ON COMPUTATIONAL IMAGING, VOL. X, NO. X, XXX XX}{Self-supervised deep learning MRI reconstruction with Noisier2Noise}

\maketitle

\begin{abstract}
In recent years, there has been attention on leveraging the statistical modeling capabilities of neural networks for reconstructing sub-sampled Magnetic Resonance Imaging (MRI) data. Most proposed methods assume the existence of a representative fully-sampled dataset and use fully-supervised training. However, for many applications, fully sampled training data is not available, and may be highly impractical to acquire. The development and understanding of self-supervised methods, which use only sub-sampled data for training, are therefore highly desirable. This work extends the Noisier2Noise framework, which was originally constructed for self-supervised denoising tasks, to variable density sub-sampled MRI data.  We use the Noisier2Noise framework to analytically explain the performance of Self-Supervised  Learning  via  Data  Undersampling  (SSDU), a recently proposed method that performs well in practice but until now lacked theoretical justification. Further, we propose two modifications of SSDU that arise as a consequence of the theoretical developments. Firstly, we propose partitioning the sampling set so that the subsets have the same type of distribution as the original sampling mask. Secondly, we propose a loss weighting that compensates for the sampling and partitioning densities. On the fastMRI dataset we show that these changes significantly improve SSDU's image restoration quality and robustness to the partitioning parameters.

% We also use the framework to modify SSDU, which we find substantially improves its reconstruction quality and robustness, offering a test set mean-squared-error within 1\% of fully supervised training on the fastMRI brain dataset. 
\end{abstract}

\begin{IEEEkeywords}
Deep Learning, Image Reconstruction, Magnetic Resonance Imaging
\end{IEEEkeywords}

\section{Introduction}
\label{sec:introduction}

\IEEEPARstart{T}{he} data acquisition process in Magnetic Resonance Imaging (MRI) consists of traversing a sequence of smooth paths through the Fourier representation of the image, referred to as ``k-space", which is inherently time-consuming. Images can be reconstructed from accelerated, sub-sampled acquisitions by leveraging the non-uniformity of receiver coil sensitivities, referred to as ``parallel imaging" \cite{ra1993fast, Pruessmann1999, Griswold2002, Uecker2014}. Compressed sensing \cite{Donoho2006, Candes2006}, which uses sparse models to reconstruct incoherently sampled data, has also  been widely applied to MRI \cite{Lustig2007, Ye2019, Jaspan2015}. 

There has been significant research attention in recent years on methods that reconstruct sub-sampled MRI data with neural networks \cite{wang2016accelerating, kwon2017parallel, hammernik2018learning, yazdanpanah2019deep, liu2020rare, yang2016deep, yang2018admm,  zhang2018ista,   zhu2018image, quan2018compressed, mardani2018deep, aggarwal2018modl,  Ahmad2020, wang2022dimension, chen2022ai}. The majority of these works use fully-supervised training. To train a network in a fully-supervised manner,  there must be a dataset comprised of fully sampled k-space data $y_{0,t} \in \mathds{C}^{N }$, where  $N$ is the dimension of k-space multiplied by the number of coils,  and paired sub-sampled data $y_t = M_{\Omega_t} y_{0,t}$. Here, $t$ indexes the training set and $M_{\Omega_t} \in \mathds{R}^{N \times N}$ is a sub-sampling mask with sampling set $\Omega_t$, so that the $j$th diagonal of $M_{\Omega_t}$ is 1 if $j \in {\Omega_t}$ and zero otherwise. Then a network $f_\theta$ with parameters $\theta$ is trained  by seeking a  minimum of a non-convex loss function:
\begin{align}
    \hat{\theta}  = \underset{\theta}{\arg \min} \sum_t L(f_\theta (y_t), y_{0,t}),  \label{eqn:fully_sampled_loss}
\end{align}
which could be, for example, an $\ell_p$ norm in the image domain after coil combination \cite{zbontar2018fastmri}. The network $f_{\tht}$ estimates the ground truth in the image domain or k-space depending on the choice of loss function.  For a k-space to k-space network, $y_{0,s}$ can be estimated with $\hat{y}_s   = f_{\tht} (y_s)$, where $s$ indexes the test set. 

%Alternatively, to ensure that all available data is used, 
%\begin{align}
%    \hat{y}_s^{dc}   = (\mathds{1} - M_{\Omega_s})f_{\tht} (y_s) + y_s  \label{eqn:fully_sampled_infer_corr}
%\end{align}
%can be employed. Eqn. \eqref{eqn:fully_sampled_infer_corr} overwrites all entries of $f_{\tht} (y_s)$ already sampled in $y_s$, so that the estimate is consistent with the acquired data; the superscript refers to ``data consistent". 

Given sufficient representative training data, fully-supervised networks can yield substantial reconstruction quality gains over sparsity-based compressed sensing methods. There are a number of large datasets available for fully supervised training, such as the fastMRI knee and brain data \cite{zbontar2018fastmri}. However, for  many contrasts, orientations, or anatomical regions of interest,  fully sampled datasets are not publicly  available. Fully sampled data is rarely acquired as part of a normal scanning protocol, so acquiring sufficient training data for a specific application is highly resource intensive. In some cases, it may not even be technically feasible to acquire such data \cite{uecker2010real, haji2018validation, lim20193d}. Therefore, for MRI reconstruction with deep learning to be applicable to datasets acquired using only standard protocols, a training method that uses solely sub-sampled data is required. 

There have been several attempts to train networks with only sub-sampled MRI data \cite{9442767, tamir2019unsupervised, huang2019deep, cole2020unsupervised, yaman2020self, liu2021mri, aggarwal2021ensure, zeng2021review, hu2021self}, some of which are based on methods from the denoising literature \cite{lehtinen2018noise2noise, krull2019noise2void,  batson2019noise2self, moran2020noisier2noise, xie2020noise2same, hendriksen2020noise2inverse, kim2021noise2score}. One such approach is Noise2Noise  \cite{lehtinen2018noise2noise}. Rather than mapping $y_t$ to $y_{0,t}$, Noise2Noise trains a network to map $y_t$ to another sub-sampled k-space $y_{T} = M_{\Omega_{T}} y_{0,T}$ where ${\Omega_{T}}$ and ${\Omega_{t}}$ are independent and $y_{0,T} = y_{0,t}$ when $t = T$ \cite{huang2019deep}. A limitation of Noise2Noise is that it requires paired data, so the dataset must contain two independently sampled scans of the same k-space \cite{liu2020rare}, which is not part of standard protocols. Further, unless compensated for \cite{gan2022deformation}, any motion and phase drifts between scans would cause the paired data to be inconsistent, violating the central assumption that underlies the method.

% \subsection{Self-supervised  learning  via  data  undersampling  (SSDU)}

SSDU \cite{yaman2020self} is a recently proposed method for ground-truth free training that does not require paired data. SSDU partitions the sampling set $\Omega_t$ into two disjoint sets: $\Omega_t = A_t \cup B_t$, where  $A_t \cap B_t = \emptyset$. Then the network is trained to recover $M_{A_t} y_t$ from $M_{B_t} y_t$:
\begin{align}
    \tht  = \underset{\theta}{\arg \min} \sum_t L(M_{A_t} f_\theta (M_{B_t} y_t), M_{A_t} y_{t}).  \label{eqn:ssdu_loss}
\end{align}
At inference, the estimate $f_{\tht} (y_s)$ is used. With a physics-guided network architecture, SSDU was found to have a reconstruction quality comparable with fully supervised training given certain empirically selected choices of $A_t$ and $B_t$.  However, it was presented without theoretical justification. Although SSDU has similarities with Noise2Self \cite{batson2019noise2self}, Noise2Self's analysis has a strong requirement on independent noise, so do not apply to k-space sampling in general. 

\subsection{Contributions}

This paper considers the recently proposed Noisier2Noise framework \cite{moran2020noisier2noise}, which was originally constructed for denoising problems. We modify Noisier2Noise so that it can be applied to variable density sub-sampled MRI data. To our knowledge, this is the first work that applies Noisier2Noise to image reconstruction. Like SSDU, the proposed modification of Noisier2Noise does not require paired data, and involves training a network to map from one subset of $\Omega_t$ to another. While SSDU recovers one disjoint set from the other, Noisier2Noise applies a second sub-sampling mask to the data, $\widetilde{y}_t = M_{\Lambda_t} y_{t} =  M_{\Lambda_t} M_{\Omega_t} y_{0,t}$, and the network is trained to recover $y_{t}$ from $\widetilde{y}_t$ with an $\ell_2$ loss. Then, at inference, the fully sampled data is estimated via a correction term based on the distributions of $\Lambda_t$ and $\Omega_t$ that ensures that the estimate is correct in expectation.

Despite their superficial differences, we show that, in fact, SSDU and Noisier2Noise are closely related. Specifically, we demonstrate that SSDU is a version of Noisier2Noise with a particular loss function modification that removes the need for the correction term at inference. The primary contribution of this paper is the use of Noisier2Noise to theoretically explain SSDU's excellent empirical performance. Specifically, we show that SSDU with an $\ell_2$ loss correctly estimates fully sampled k-space in expectation: see Section \ref{sec:SSDU}.

The second contribution of this paper is the proposal of two modifications of SSDU that significantly improve its reconstruction quality and robustness to the parameters of $M_{\Lambda_t}$, both of which arise as a consequence of SSDU's connection to Noisier2Noise. Firstly, we use Noisier2Noise to inform SSDU's sampling set partition: we show that SSDU's performance improves when $B_t$ has the same type of distribution as the original mask $\Omega_t$, but not necessarily with the same parameters. Secondly, we show that SSDU's performance improves when a particular weighting is employed in the loss function. This non-trivial weighting, which arises as a consequence of the novel theoretical analysis of SSDU, depends on the distributions of $\Lambda_t$ and $\Omega_t$ and has minimal additional computational cost: see Section \ref{sec:KW-ssdu}.

Although this paper focuses on MRI reconstruction, we emphasize that none of the theoretical developments are specific to k-space. This framework is therefore applicable to any image reconstruction problem with a forward model that involves random sub-sampling, such as low dose x-ray computed tomography \cite{kang2017deep} or astronomical imaging \cite{8081654}. 

\section{Theory}

This section describes how the Noisier2Noise framework can be applied to sub-sampled data. Additive and multiplicative noise versions of Noisier2Noise are proposed in \cite{moran2020noisier2noise}.  Based on the observation that a k-space sub-sampling mask can be considered as multiplicative ``noise", we extend Noisier2Noise to image reconstruction by modifying the latter. It is standard practice in MRI to sub-sample k-space with variable density, so that low frequencies, where the spectral density is larger, are sampled with higher probability \cite{Lustig2007}. Since the multiplicative noise version of standard Noisier2Noise assumes uniformity, this requires a modification of the framework to variable density sampling.

\subsection{Variable density Noisier2Noise for reconstruction \label{sec:vdn2n}}

The terms in the measurement model $y_t = M_{\Omega_t}y_{0,t}$ can be considered as instances of random variables. We denote $Y = \Mo Y_0$, where  $Y$, $\Mo$ and $Y_0$ are the random variables  corresponding to  $y_t$,  $M_{\Omega_t}$, and  $y_{0,t}$ respectively. Now consider the multiplication of $Y$ by a second mask represented by the random variable $\Mt$, 
\begin{align*}
	\Yt = \Mt Y = \Mt \Mo Y_0,
\end{align*}
so that $\Yt$ is  a further sub-sampled random variable. The following result states how the expectation of $Y_0$ can be computed from $\Yt$ and $Y$. Here, and throughout this paper, $\Eds [\cdot]$ is used to denote the expectation over all random variables within the brackets.
\begin{claim}
\label{clm:n2n}
When $\Eds [M_{\Omega, jj}] = p_j   > 0$ and $\Eds [M_{\Lambda,jj}] = \pt_j  <1$ for all $j$, the expectation of $Y_0$ given $\Yt$ is
\begin{align}
	\mathds{E}[ Y_0 | \Yt ] = (\mathds{1} - K)^{-1} (\mathds{E}[Y | \widetilde{Y}] - K \Yt),  \label{eqn:Eyyt}
\end{align}
where $K$ is a diagonal matrix defined as
\begin{align}
	K =  (\mathds{1} - \Pt P)^{-1}(\mathds{1} - P) \label{eqn:Kdef}
\end{align}
for $P = \Eds[\Mo]$ and $\Pt = \Eds[\Mt]$.
\end{claim}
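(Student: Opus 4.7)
The plan is to prove \eqref{eqn:Eyyt} coordinate-wise, using the diagonal structure of $M$ and $\widetilde{M}$ to reduce the vector identity to independent scalar identities. For each index $j$, write $p_j = P_{jj}$ and $\tilde{p}_j = \widetilde{P}_{jj}$, so that $K_{jj} = (1-p_j)/(1 - \tilde{p}_j p_j)$. Since $\widetilde{Y}_j = \widetilde{M}_{jj} M_{jj} Y_{0,j}$ and $Y_j = M_{jj} Y_{0,j}$ depend only on the $j$-th components of the masks and the signal, the argument splits on whether $\widetilde{Y}_j$ is zero and proceeds entry by entry.

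First I would dispose of the easy event $\{\widetilde{Y}_j \neq 0\}$: here $\widetilde{M}_{jj} = M_{jj} = 1$ almost surely, so $Y_{0,j} = Y_j = \widetilde{Y}_j$, and both sides of \eqref{eqn:Eyyt} collapse to $\widetilde{Y}_j$ after the factors of $(1 - K_{jj})$ cancel. The substantive case is $\{\widetilde{Y}_j = 0\}$. Assuming $Y_{0,j}$ has a diffuse distribution---as is standard for MRI k-space---so that $\{\widetilde{Y}_j = 0\} = \{\widetilde{M}_{jj} M_{jj} = 0\}$ up to a null set, I would partition this event into the three configurations $(\widetilde{M}_{jj}, M_{jj}) \in \{(1,0), (0,0), (0,1)\}$ and compute their conditional probabilities using independence of $\widetilde{M}$, $M$, and $Y_0$. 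The first two contribute $0$ to $\mathds{E}[Y_j \mid \widetilde{Y}]$, while the third contributes $Y_{0,j}$, leading to
\begin{align*}
\mathds{E}[Y_j \mid \widetilde{Y}] = \frac{(1-\tilde{p}_j) p_j}{1 - \tilde{p}_j p_j}\, \mathds{E}[Y_{0,j} \mid \widetilde{Y}],
\end{align*}
where the prefactor simplifies to $1 - K_{jj}$. Rearranging yields the scalar form of \eqref{eqn:Eyyt} since $\widetilde{Y}_j = 0$, and reassembling across $j$ produces the diagonal matrix identity.

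The main obstacle I expect is justifying that $\mathds{E}[Y_{0,j} \mid \widetilde{Y}]$ depends only on $\widetilde{Y}_j$, which requires that the mask coordinates be mutually independent (or at least that $(\widetilde{M}_{jj}, M_{jj})$ is independent of the remaining entries of $\widetilde{Y}$ conditional on $\widetilde{Y}_j$); I would either state this as a standing assumption on the sampling model or invoke the Bernoulli product structure commonly used in variable-density sampling. A secondary subtlety is the invertibility of $\mathds{1} - K$, which is ensured by the stated hypothesis $p_j > 0$ together with $\tilde{p}_j < 1$ (the latter being implicit: if $\tilde{p}_j = 1$ then $\widetilde{M}$ acts as the identity at coordinate $j$ and the Noisier2Noise construction is vacuous there).
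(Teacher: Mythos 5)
Your proof is correct and follows essentially the same route as the paper's: work coordinate-wise, split on $\Yt_j = 0$ versus $\Yt_j \neq 0$, identify $k_j = \Pds[Y_j = 0 \mid \Yt_j = 0] = (1-p_j)/(1-\pt_j p_j)$ by the same two-line probability computation, and rearrange; partitioning $\{\Yt_j = 0\}$ by the mask configuration $(\Mt_{jj}, M_{jj})$ is just a relabeling of the paper's split on $Y_j = 0$ versus $Y_j \neq 0$. The caveats you flag---that $Y_{0,j}$ should be diffuse so that $\{\Yt_j=0\}$ is a mask event, that reducing $\Eds[\,\cdot \mid \Yt\,]$ to $\Eds[\,\cdot\mid\Yt_j\,]$ needs coordinate independence of the masks, and that invertibility of $\eye - K$ also requires $\pt_j < 1$---are all left implicit in the paper, so raising them is a point in your favor rather than a gap.
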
 
\begin{proof}
See Appendix \ref{app:expec_proof}, which is based on the proof given in Section 3.4 of \cite{moran2020noisier2noise}.
\end{proof}

Eqn. \eqref{eqn:Eyyt} generalizes the version of Noisier2Noise proposed  for uniform, multiplicative noise in \cite{moran2020noisier2noise} to variable density sampling. The difference between the uniform and variable density versions is the matrix $K$, which is a scalar in \cite{moran2020noisier2noise}. For the special case where $\Mo$ and $\Mt$ are uniformly random sub-sampling masks, $P$, $\Pt$ and therefore $K$ are proportional to the identity matrix, and \eqref{eqn:Eyyt} simplifies to the uniform version. The mathematical requirement that $p_j > 0$ and $\pt_j <1$ for all $j$ simply ensures that $(\mathds{1} - K)$ is invertible: see Appendix A.

Eqn. \eqref{eqn:Eyyt} implies that $\mathds{E}[ Y_0 | \Yt ]$ can be estimated without fully sampled data by training a network to estimate $\mathds{E}[ Y | \Yt]$. To do this, a network can be trained to minimize 
\begin{align}
	\theta^*  = \underset{\theta}{\arg \min} \hspace{0.1cm} \Eds[ \| W (f_{\theta} (\Yt) - Y) \|_2^2 | \Yt ] \label{eqn:weighted_l2_rv}
\end{align} 
for a full-rank matrix $W$. The minimum occurs when the gradient with respect to $\theta$ is zero:
\begin{multline*}
\nabla_\theta \Eds[ \| W (f_{\theta} (\Yt) - Y) \|_2^2 | \Yt ] \\ = \Eds[  2J W^H W (f_{\theta} (\Yt) - Y)  | \Yt ] = 0,
\end{multline*}
where $J$ is the Jacobian matrix with entries $J_{ij} = \partial f_{\theta} (\Yt)_j / \partial \theta_i$. The number of parameters is typically much greater than $N$, so $J$ has far more rows than columns. Assuming that the rows of $J$ are maximally linearly independent, so the row space is $N$-dimensional, the only solution is
\begin{align}
\Eds [ W^HW( f_{\theta}(\Yt) - Y)  | \Yt ] &= 0. \label{eqn:whw_der}
\end{align}
 If $W$ is full-rank,  $W^HW$ is also full rank, so left-multiplying by $(W^HW)^{-1}$ and using $\Eds [f_{\theta} (\Yt)  | \Yt ] = f_{\theta} (\Yt)$,
\begin{align*}
f_{\theta} (\Yt)  = \Eds [Y | \Yt].
\end{align*}
Therefore, by \eqref{eqn:Eyyt}, a candidate for estimating fully sampled k-space with sub-sampled data only is
\begin{align*}
	\mathds{E}[ Y_0 | \Yt ] = (\mathds{1} - K)^{-1} (f_{\theta^*} (\Yt) - K \Yt).
\end{align*}
This expression does not use $Y$, so does not use all available data. Two candidate approaches for using all available data at inference are considered in this paper. Firstly, one can overwrite known entries of the network output with $Y$:
\begin{align*}
\hat{Y}^{dc} &= (\mathds{1} - \Mo)\mathds{E}[ Y_0 | \Yt ] + Y  \\
& = (\mathds{1} - \Mo)(\mathds{1} - K)^{-1} (f_{\theta^*} (\Yt) - K \Yt) + Y \\
& = (\mathds{1} - \Mo)(\mathds{1} - K)^{-1} f_{\theta^*} (\Yt) + Y,
\end{align*}
where the final step uses $(\mathds{1} - \Mo)\Yt = (\mathds{1} - \Mo)\Mt \Mo Y_0 =0$. Here, the superscript refers to ``data consistent", since the estimate is exactly consistent with $Y$. We emphasize that $\hat{Y}^{dc}$ is consistent with \textit{all} available data $Y$, not just the data in $\Yt$.  Alternatively, similar to the approaches suggested in both SSDU \cite{yaman2020self} and the additive noise examples in Noisier2Noise \cite{moran2020noisier2noise}, one can use singly sub-sampled k-space $Y$ as the network input at inference:
\begin{align}
	\hat{Y} = (\mathds{1} - K)^{-1} (f_{\theta^*} (Y) - K Y) \label{eqn:Yhat_n2n}
\end{align}
Since Claim \ref{clm:n2n} applies to $f_{\theta^*} (\Yt)$, not $f_{\theta^*} (Y)$, \eqref{eqn:Yhat_n2n} is not guaranteed to be correct in expectation. However, it has the advantage that all available data is used by the network. Hence, despite deviating from strict theory, we have found that it performs well in practice: see Section \ref{sec:results}.

This suggests the following procedure, illustrated in Fig. \ref{fig:flow}, for training a network without fully-sampled data. For each sub-sampled k-space in the training set $y_t = M_{\Omega_t} y_{0,t}$, generate a further sub-sampled k-space $\widetilde{y}_t = M_{\Lambda_t} y_{t} =  M_{\Lambda_t} M_{\Omega_t} y_{0,t}$, where $M_{\Lambda_t}$ is an instance of $\Mt$. Then, approximate \eqref{eqn:weighted_l2_rv} by training a  network to minimize the loss function
\begin{align}
    \tht  = \underset{\theta}{\arg \min} \sum_t \| W (f_{\theta} (\widetilde{y}_t) - y_t) \|_2^2, \label{eqn:weighted_l2}
\end{align}
for some full-rank matrix $W$. During inference, estimate fully-sampled k-space with either
\begin{align}
\hat{y}_s^{dc} = (\mathds{1} - M_{\Omega_s})(\mathds{1} - K)^{-1} f_{\tht} (\widetilde{y}_s) + y_s \label{eqn:prop_inf}
\end{align}
or 
\begin{align}
\hat{y}_s = (\mathds{1} - K)^{-1} (f_{\tht} (y_s) - K y_s), \label{eqn:N2Nyhat}
\end{align}
where $s$ indexes the test set.

\begin{figure}
\centering
    \includegraphics[width=\columnwidth]{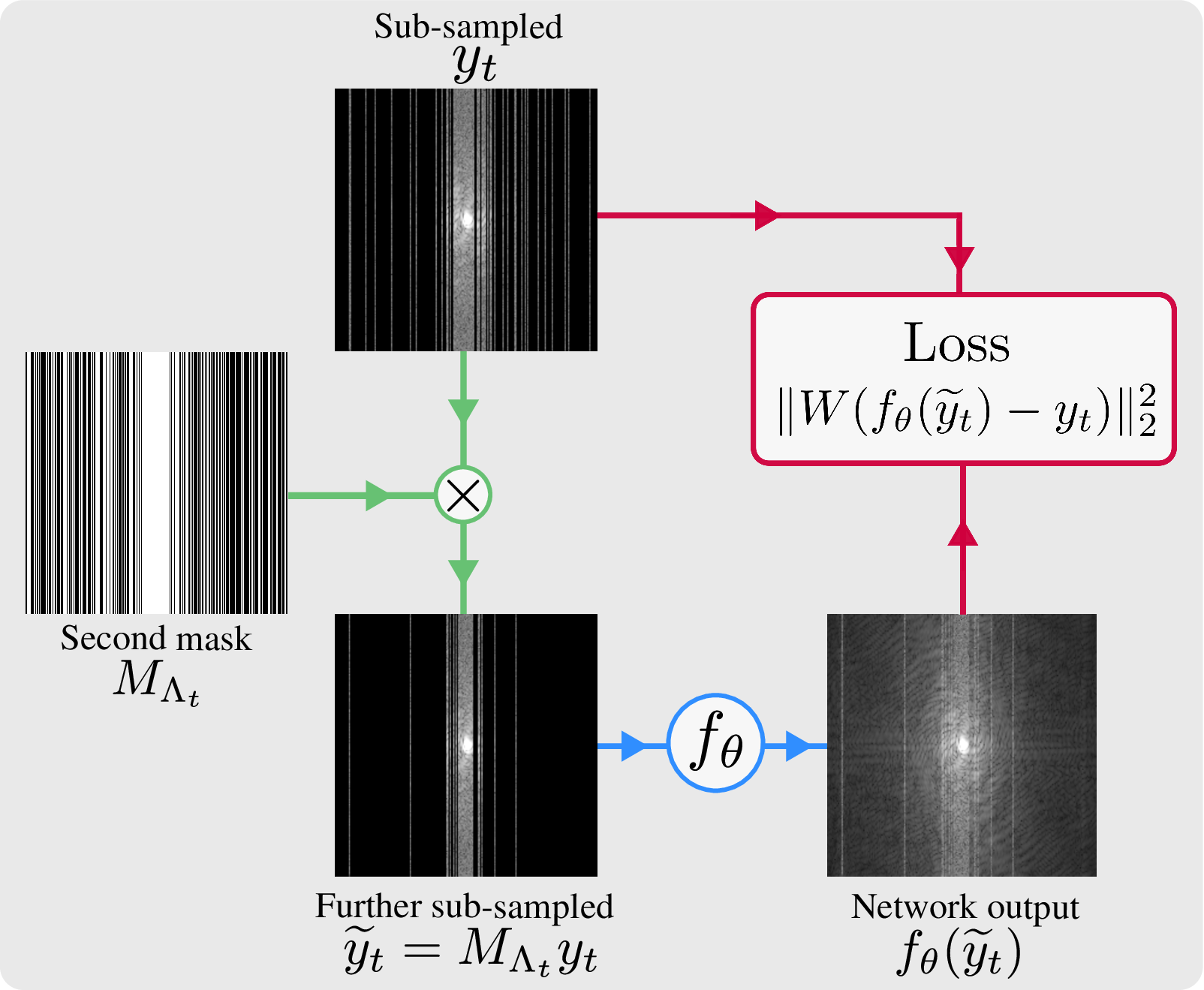}
\caption{A schematic of the self-supervised training methods in this paper. If the loss weighting $W$ is full rank, the training method is variable density Noisier2Noise, as proposed in Section \ref{sec:vdn2n}, whereas if $W = (\eye -  M_{\Lambda_t}) M_{\Omega_t}$ the training method is SSDU: see Section \ref{sec:SSDU} \label{fig:flow}}
\end{figure}

%\begin{figure}
%    \centering
%        \includegraphics[width = 0.5\textwidth]{figs/explainer.eps}
%    \caption{Illustration of the training and inference procedures Noisier2Noise applied to reconstruction. The network is trained to recover $y_t$ from $\widetilde{y}_t$ with an $\ell_2$ loss function, yielding network  parameters $\tht$. During inference, $f_{\tht}(\widetilde{y}_s)$, $y_s$ and $K$ are combined to  generate the k-space estimate. All images are cropped to a central $320 \times 320$ region, and representations of k-space show the magnitude of the first coil only. [Not sure we need this figure?]}
%    \label{fig:explainer}
%\end{figure}

In other words, we train a network to estimate the ``singly" sub-sampled k-space $y_t$ from ``doubly" sub-sampled k-space $\widetilde{y}_t$ and then, during inference, apply a correction based on the diagonal matrix $K$ to estimate the fully sampled data. The correction term only needs to be applied during inference and has minimal computational cost.  % The proposed training and inference procedure is illustrated in Fig. \ref{fig:explainer}. 

In \cite{moran2020noisier2noise}, only the version with $W = \eye$ was presented. Here we present a version with non-trivial $W$ because it provides a theoretical link to SSDU; Section \ref{sec:SSDU} shows that Noisier2Noise with the rank-deficient $W = (\eye - \Mt)\Mo$ is SSDU exactly. 

Noisier2Noise and SSDU work because the network cannot deduce from $\widetilde{y}_t$  which entries of $y_t$ are non-zero \cite{moran2020noisier2noise}. Therefore, the loss is minimized when the network learns to recover \textit{all} of k-space: see Section \ref{sec:discussion} for a detailed discussion. 

\subsection{Choice of mask distributions}

The only condition on the first mask $\Mo$ from Claim \ref{clm:n2n} is that $p_j > 0$ for all $j$. In other words, the guarantee only applies when there is a non-zero probability that there are sampled examples of all k-space locations in the training set. 

Claim \ref{clm:n2n}  also states that the second mask $\Mt$ must obey $\pt_j <1$ for all $j$. This ensures that there is a non-zero probability that any entry of $\Yt$ is masked. Unlike $\Mo$, whose distribution is determined by the acquisition protocol, the $\Mt$ is chosen freely during training. Following \cite{moran2020noisier2noise}, we suggest using a distribution of $\Mt$ that is the same type as $\Mo$, but not necessarily with the same parameters. For instance, if $\Mo$ is column-wise sampling with variable density, such as in Fig. \ref{fig:flow}, an appropriate $\Mt$ is one that is also column-wise, but possibly with a different variable density distribution.

% One may be tempted to choose a $\Mt$ distribution that sub-samples by a small amount, so that the difference between $\widetilde{y}_t$ and $y_t$ is small and $f_{\theta} (\widetilde{y}_t)$ can be trained to higher accuracy. However, the $j$th diagonal of $(\eye - K)^{-1}$ is very large when $\widetilde{p}_{j}$ is close to $1$. Therefore, although $f_{\tht} (\widetilde{y}_t)$ would  be more accurate,   the correction in \eqref{eqn:prop_inf} or \eqref{eqn:N2Nyhat} causes the estimate at inference to be very sensitive to inaccuracies. In other words, there is a trade-off between the accuracy of the network and the magnitude of the penalty at inference. 

% High frequencies have a lower $p_j$, so will generally have a larger $k_j$, as demonstrated in the example of shown in Fig. \ref{fig:explainer}. However, since the spectral density of k-space is much lower in high frequencies, the penalty due to a larger $k_j$ is less crucial.

\subsection{Choice of network}

Noisier2Noise is agnostic to the network architecture. We have found that using the data consistent function 
\begin{align}
     f_\theta (\widetilde{y}_t) = (\mathds{1} - M_{\Lambda_t} M_{\Omega_t}) g_\theta (\widetilde{y}_t) + \widetilde{y}_t, \label{eqn:extra_dc}
\end{align}
where $g_\theta (\widetilde{y}_t)$ is a network with arbitrary architecture, may improve the performance of Noisier2Noise. This is because the $g_\theta (\widetilde{y}_t)$ in \eqref{eqn:extra_dc} only recovers regions of k-space that are not already sampled in $\widetilde{y}_t$, so the network does not need to learn to map sampled k-space locations to themselves. We emphasize that \eqref{eqn:extra_dc} ensures that $f_\theta (\widetilde{y}_t)$ is consistent with $\widetilde{y}_t$, while \eqref{eqn:prop_inf} ensures the estimate $\hat{y}_s^{dc}$ is consistent with $y_s$, which is only applied at inference and cannot be  part of the network architecture when $\widetilde{y}_s$ is used as the input.

Many popular network architectures for MRI reconstruction are based on a sequence of ``unrolled" iterations of a optimization algorithm \cite{hammernik2022physics} such as the Iterative Shrinkage Thresholding Algorithm (ISTA) \cite{Daubechies2004} or the Alternating Direction Method of Multipliers (ADMM) \cite{Boyd2010}. These are variously known as ``physics-guided", ``physics-based" or ``model-based" methods due to their explicit use of the MRI forward model. These architectures typically alternate between a module that recovers missing k-space entries by removing aliasing in the image domain and a module that ensures consistency with the k-space data. This implies that \eqref{eqn:extra_dc}, or possibly a ``soft" version of it where the data is not forced to be exactly consistent, may already be implemented as part of the network architecture. In the experimental evaluation of the methods in this paper we used the Variational Network (VarNet) \cite{hammernik2018learning, sriram2020end}, which is one such architecture where \eqref{eqn:extra_dc} is not necessary. However, in preliminary studies not presented in this paper we found that a U-net \cite{ronneberger2015u}, which does not already employ data consistency, benefited considerably from \eqref{eqn:extra_dc}.

\subsection{Relationship to SSDU \label{sec:SSDU}}

This section shows that SSDU \cite{yaman2020self} with an $\ell_2$ loss is a version of Noisier2Noise with a particular rank-deficient loss weighting matrix $W$. 

To see the connection between SSDU and Noisier2Noise, it is instructive to see the relationship between Noisier2Noise's $\Lambda_t$ and SSDU's  disjoint subsets $A_t$ and $B_t$. Disjoint subsets of $\Omega_t$ can be formed in terms of $\Omega_t$ and $\Lambda_t$ by setting $A_t = \Omega_t \setminus \Lambda_t $ and $B_t = \Omega_t \cap \Lambda_t $. The distribution of $A_t$ and $B_t$ are defined by the distributions of $\Omega_t$ and $\Lambda_t$ and always satisfy $A_t \cup B_t = \Omega_t$ and  $A_t \cap B_t = \emptyset$ as required. In terms of sampling masks, this is written as $M_{A_t} =   (\eye -  M_{\Lambda_t}) M_{\Omega_t} $ and $M_{B_t} = M_{\Lambda_t} M_{\Omega_t}$. Therefore, SSDU's loss \eqref{eqn:ssdu_loss} with a squared $\ell_2$ norm is 
\begin{multline*}
\sum_t \| M_{A_t} f_\theta (M_{B_t} y_t) - M_{A_t} y_{t} \|^2_2 \\ = \sum_t \|  (\eye - M_{\Lambda_t})M_{\Omega_t}(f_{\theta} (\yt_t) - y_t) \|^2_2, %\label{eqn:ssdu_loss}
\end{multline*}
so is exactly  Noisier2Noise with $W = (\eye - M_{\Lambda_t})M_{\Omega_t}$. In other words, while Noiser2Noise's loss is computed over all k-space, SSDU's loss is computed only on indices that are in $\Omega_t$ but not in $\Lambda_t$. 

SSDU's weighting ensures that any indices not sampled in $Y$ are ignored in the loss. One might think that the correct choice for this goal would be $W = M_{\Omega_t}$. However, if a data consistent network is employed, as in \eqref{eqn:extra_dc}, the contribution to the loss from indices in both ${\Omega_t}$ and ${\Lambda_t}$ would be zero because they are consistent by construction. Therefore the loss for $W = M_{\Omega_t}$ and $W = (\eye - M_{\Lambda_t})M_{\Omega_t}$ would be identical. A similar idea was presented for fully supervised learning in \cite{yaman2021improved}, where a mask is applied to the training data multiple times. 

\subsection{Proof of SSDU}

This section shows that SSDU's loss weighting causes the correction $(\eye -K)^{-1}$ at inference to no longer be necessary.  When the weighting matrix $W$ is the random variable $(\eye - \Mt)\Mo$, the network parameters are trained to seek a minimum of
\begin{align}
	\theta^*  = \underset{\theta}{\arg \min} \hspace{0.1cm} \Eds[ \| (\eye - \Mt)\Mo (f_{\theta} (\Yt) - Y) \|_2^2 | \Yt ]. \label{eqn:LSSDU}
\end{align} 
Unlike Noisier2Noise, $W = (\eye - \Mt)\Mo$ is not full-rank, so $f_{\theta^*} (\Yt)  \neq \Eds [ Y | \Yt ]$. The usual theoretical goal for self-supervised methods is to prove that the network is correct in expectation \cite{lehtinen2018noise2noise, krull2019noise2void,  batson2019noise2self, moran2020noisier2noise, xie2020noise2same, hendriksen2020noise2inverse, kim2021noise2score}, as in Claim \ref{clm:n2n} for variable density Noisier2Noise. In the following we state, to our knowledge, the first similar result for SSDU.

%\begin{claim}
%\label{clm:ssdu}
%When $p_j > 0$ and $\pt_j <1$ for all $j$, a network with parameters that minimizes \eqref{eqn:LSSDU} satisfies
%\begin{align*}
%(\eye - \Mt \Mo)f_{\theta^*}(\Yt)  = (\eye -\Mt\Mo)\Eds [Y_0 | \Yt]. 
%\end{align*}
%\end{claim}

\begin{claim}
\label{clm:ssdu}
A network with parameters that minimizes \eqref{eqn:LSSDU} satisfies
\begin{align}
(\eye-K)(\eye - \Mt \Mo)(f_{\theta^*}(\Yt) - \Eds [Y_0 | \Yt])  = 0. \label{eqn:ssdu_clm}
\end{align}
\end{claim}

\begin{proof}
See Appendix \ref{app:ssdu}.
\end{proof}

If $\eye-K$ is invertible, which holds when $p_j > 0$ and $\pt_j <1$ for all $j$, 
\begin{align*}
(\eye - \Mt \Mo)f_{\theta^*}(\Yt)  = (\eye -\Mt\Mo)\Eds [Y_0 | \Yt].
\end{align*}
Therefore, in general, $f_{\theta^*}(\Yt)$ is correct in expectation, but only in regions of k-space that are not sampled in $\Yt$. This contrasts with the variable density Noisier2Noise method presented in Section \ref{sec:vdn2n}, which is correct in expectation for all k-space indices. However, as described in the following, this apparent shortcoming can easily be circumvented by using all available data at inference.

Similarly to Noisier2Noise's \eqref{eqn:prop_inf} and \eqref{eqn:N2Nyhat}, we consider two options for the k-space estimate at inference, both of which use all available data. Firstly, similarly to \eqref{eqn:prop_inf}, the data consistent estimate
\begin{align}
\hat{Y}^{dc} = (\mathds{1} - \Mo) f_{\theta^*} (\Yt) + Y \label{eqn:SSDU_dc}
\end{align}
can be used, which is correct in expectation everywhere in k-space for any network architecture. Alternatively, the SSDU paper \cite{yaman2020self} suggests using 
\begin{equation}
\hat{Y} = f_{\theta^*} (Y) \label{eqn:yhat_ssdu}
\end{equation}
and a physics-guided network architecture. Like  \eqref{eqn:N2Nyhat} for Noisier2Noise, the network input for \eqref{eqn:yhat_ssdu} is singly  sub-sampled, so Claim \ref{clm:ssdu} does not apply and the estimate is not guaranteed to be correct in expectation. Nonetheless, it has the advantage over \eqref{eqn:SSDU_dc} that it uses all available data in the input to the network. As in \cite{yaman2020self}, we have found that \eqref{eqn:yhat_ssdu} performs well in practice when the network architecture includes a data consistency module: see Section \ref{sec:results}.

We emphasize that unlike Noisier2Noise, SSDU does not require the correction term $(\mathds{1} - K)^{-1}$ at inference. This implies that SSDU is less sensitive to inaccuracies in $f_{\theta^*} (\Yt)$, and we have found that SSDU outperforms Noisier2Noise in general: see Section \ref{sec:results}.

\subsection{K-weighted SSDU \label{sec:KW-ssdu}}

Since we train on a finite number of instances of the random variables $Y$, $\Yt$, $\Omega$ and $\Lambda$, the network parameters we obtain in practice, which we denote $\hat{\theta}$, are an approximation of the ideal $\theta^*$ from \eqref{eqn:LSSDU}. In this case, the right-hand-side of \eqref{eqn:ssdu_clm} is not exactly zero. Rather, 
\begin{align}
(\eye-K)(\eye - \Mt \Mo)(f_{\hat{\theta}}(\Yt) - \Eds [Y_0 | \Yt])  = \mathcal{E}, 
\label{eqn:ssdu_E}
\end{align}
where $\mathcal{E}$ is a vector random variable. The vector $\mathcal{E}$ characterizes the difference between a true expectation and the network's estimate of it, which is non-zero for finite data. In other words, $\mathcal{E}$ is a statistical error due to finite sampling. The difference between the trained network's output and the expectation of interest, $\Eds [Y_0 | \Yt]$, is $(\mathds{1} - K)^{-1}\mathcal{E}$. This implies that the network is more sensitive to errors in k-space locations where $(\mathds{1} - K)^{-1}$ is large. 

To compensate for this,  we propose minimizing the following weighted version of SSDU's loss as an alternative to \eqref{eqn:LSSDU}:
\begin{equation*}
	\underset{\theta}{\arg \min} \hspace{0.1cm} \Eds[ \| (\eye - K)^{-\frac{1}{2}}(\eye - \Mt)\Mo (f_{\theta} (\Yt) - Y) \|_2^2 | \Yt ]. \label{eqn:weightSSDU}
\end{equation*}
Introducing $(\eye - K)^{-\frac{1}{2}}$ in the loss cancels the $\eye - K$ in \eqref{eqn:ssdu_E}, so mitigates the error amplification caused by  $\theta^*$ approximation. We find that this version of SSDU, which we refer to as ``K-weighted SSDU" throughout the remainder of this paper, substantially improves the image restoration quality and robustness to training hyperparameters: see Section \ref{sec:results}.  We chose the power $(\eye - K)^{-\frac{1}{2}}$ because it exactly cancels the $\eye - K$ on the left-hand-side of \eqref{eqn:ssdu_E} when the squared $\ell_2$ loss is used; we also tried power $(\eye - K)^{-1}$ and found that, as expected, it did not perform as well in practice.

\subsection{Understanding the need for correction}
 This section intuitively explains why Noisier2Noise  requires correction at inference but SSDU does not. We can write the weighted loss as 
\begin{multline*}
\| W (f_{\theta} (\Yt) - Y) \|_2^2 \\
 = \|W[\Mo \Mt + (\eye  - \Mt)\Mo + (\eye -\Mo)]( f_{\theta}(\Yt) - Y) \|^2_2,
% \label{eqn:l2_decomp}
\end{multline*}
where we have used that the term is square brackets equals the identity matrix. When $f_\theta(\Yt)$ is consistent with $\Yt$, such as in \eqref{eqn:extra_dc},  $\Mo \Mt(f_{\theta}(\Yt) - Y) = 0$. Therefore
\begin{multline}
 \| W (f_{\theta} (\Yt) - Y) \|_2^2 
% \\ &= \|W[(\eye  - \Mt)\Mo + (\eye -\Mo)]( f_{\theta}(\Yt) - Y) \|^2_2 \nonumber
\\ = \|W(\eye  - \Mt)\Mo (f_{\theta}(\Yt) - Y) \|^2_2  + \|W(\eye  - \Mo)  f_{\theta}(\Yt) \|^2_2, \label{eqn:L}
\end{multline}
where we have used $(\eye  - \Mo)Y = 0$. Eqn. \eqref{eqn:L} is SSDU's loss function \eqref{eqn:LSSDU} plus a contribution from all $j \in \Omega_t^c$.

Intuitively, the second term on the right-hand-side of \eqref{eqn:L} causes the proposed method to underestimate regions of k-space with index $j \in \Omega_j^c$. This underestimation is compensated for with $(\mathds{1} - K)^{-1}$ at inference. For SSDU, where $W = (\eye - \Mt)\Mo$, the second term on the right-hand-side of \eqref{eqn:L} is zero, k-space is not underestimated anywhere, and there is no need for a correction term at inference.  %As discussed in Section \ref{sec:SSDU},  $W = \Mo$ gives an identical loss function to SSDU when a data consistent network output is used.

%\begin{figure}
%    \centering
%    \includegraphics[width = 0.4\textwidth]{figs/k_function_of_p_omega_lambda.eps}
%    \caption{ The correction term $k_j = (1 - p_j)/(1 - \pt_j p_j)$ as a function of $p_j$ and $\pt_j$.  For the proposed method to perform well, it is necessary to choose a $\pt_j$ that avoids a $k_j$ close to 1, as it is more sensitive to errors in these regions. However, the map between $\yt_t$ and $y_t$ is easier to learn when $\pt_j$ is large.}
%    \label{fig:kj_graphic}
%\end{figure}

\section{Experimental method}
\label{sec:exp_meth}

\subsection{Description of data}

We used the multi-coil brain and knee data from the fastMRI dataset \cite{zbontar2018fastmri}, which is comprised of multi-channel raw k-space MRI data. The reference fastMRI test set data is magnitude images only, without fully sampled k-space data. Since we also require phase, we discarded the data allocated for testing and generated our own partition into training, validation and test sets. For the brain data, we only used data that was acquired on $16$ coils,  and used training, validation and test set sizes of 127, 19 and 14 volumes (2020, 302, and 224 slices) respectively. For the knee data, the training, validation and test sets consisted of 166, 19 and 14 volumes (5977, 665, and 493 slices) respectively.  We set the network output to be zero in regions of k-space where the reference data had zero padding. 

\begin{table*}[th]
\centering
\scriptsize
\begin{tabular}{c|c|c|c|c}
%\hline
\textsc{Name}         & \textsc{Loss weighting }$W$ & $\Mt$ \textsc{distribution}                                             & \textsc{Estimate with} $\yt_s$ \textsc{input}& \textsc{Estimate with} $y_s$ \textsc{input}  \\\hhline{=|=|=|=|=}
Unweighted Noisier2Noise & $\eye$ &    1D column-wise                                           & $(\mathds{1} - M_{\Omega_s})(\eye - K)^{-1} f_{\tht} (\yt_s) + y_s$ &$(\mathds{1} - K)^{-1} (f_{\tht} (y_s) - K y_s)$                                                                                                     \\[0pt] \hline 
 2D partitioned SSDU   & $(\eye - M_{\Lambda_t})M_{\Omega_t}$ & 2D Bernoulli & $(\mathds{1} - M_{\Omega_s}) f_{\tht} (\yt_s) + y_s$                                                                                              &  $f_{\tht} (y_s)$        \\[0pt] \hline 
1D partitioned SSDU   & $(\eye - M_{\Lambda_t})M_{\Omega_t}$ & 1D column-wise & $(\mathds{1} - M_{\Omega_s}) f_{\tht} (\yt_s) + y_s$                                                                                                     &$f_{\tht} (y_s)$                                                                                                      \\[0pt] \hline
K-weighted 1D partitioned SSDU   & $(\eye - K)^{-\frac{1}{2}}(\eye - M_{\Lambda_t})M_{\Omega_t}$ & 1D column-wise & $(\mathds{1} - M_{\Omega_s}) f_{\tht} (\yt_s) + y_s$                                                                                                     &$f_{\tht} (y_s)$                                                                                                      \\[0pt] % \hline
\end{tabular}
\caption{The self-supervised methods evaluated in this paper. Here, and throughout this paper, the subscripts $t$ and $s$ index the training and test sets respectively.  Examples of $M_{\Lambda_t} M_{\Omega_t}$ for 2D Bernoulli and 1D column-wise $M_{\Lambda_t}$ are shown in Fig. \ref{fig:second_mask_types}.  \label{tab:meths}}
% \vspace{-0.5cm}
\end{table*}

\subsection{Network architecture \label{sec:net_arch}}

For $f_\theta$, we used the variant of the VarNet \cite{ hammernik2018learning} that estimates coil sensitivities on-the-fly \cite{sriram2020end}, which performs competitively on the fastMRI leaderboard and is available as part of the fastMRI package\footnote{\underline{https://github.com/facebookresearch/fastMRI}}. After a coil sensitivity estimation module, VarNet uses multiple repetitions of a module based on gradient descent, which is comprised of a data consistency term in k-space and a prior based on a U-net \cite{ronneberger2015u} that acts in the image domain after an inverse Fourier transform and coil combination. The output of the neural network was in k-space. We used $6$ repetitions of the main module, so that our model had around $1.5 \times 10^7$ parameters. Note that in \cite{zbontar2018fastmri}, the Structural Similarity Index (SSIM)  \cite{Wang2004} was used as the loss, while in this paper we use an $\ell_2$ loss.

The only additional operations SSDU and Noisier2Noise require compared to fully-supervised training are simple entry-wise masks, so all methods had similar memory requirements and training time. We trained for 50 epochs, which took around 17 hours on a GTX 1080 Ti GPU with 11GB of RAM for the brain data. For all methods we used the Adam optimizer \cite{kingma2014adam} with a fixed learning rate of $10^{-3}$. Our PyTorch implementation is publicly available on GitHub\footnote{\underline{https://github.com/charlesmillard/Noisier2Noise\_{for}\_recon}}.

\subsection{Distribution of masks}

So that the distribution of the sampling masks were known exactly, we generated our own masks rather than using those suggested in fastMRI. Unless stated otherwise, the distribution of the first mask $\Mo$ was 1D column-wise. We fully sampled the central 10 columns and sampled the remainder with polynomial variable density. We used polynomial order 8, and scaled the probability density $P$ so that it matched a desired acceleration factor. We ran each method with $R_\Omega \in \{4, 8\}$, where $R_\Omega=N/\sum_j p_j$ is the expected acceleration factor. An example at $R_\Omega=4$ is shown in Fig. \ref{subfig:singly}.

In \cite{moran2020noisier2noise}, it is suggested that the distribution of Noisier2Noise's second random variable is the same as the first, but not necessarily with the same distribution parameters. Therefore, for Noisier2Noise's second mask $\Mt$, we used the same type of distribution as $\Mo$ with a different variable density. An example with $R_\Omega=4$ and $\Rt = N/\sum_j \pt_j = 2$ is shown in Fig. \ref{subfig:doubly_col}. Concretely, we define two masks as having the same `type' of distribution when the conditional dependence of the sampling set indices is the same. Let $p_{j|k} = \Pds [ j \in \Omega | k \in \Omega]$. If $ p_{j|k} = p_j$ for all $j$ and $k$, the entries are independent and the mask is the type `2D Bernoulli'.  If $p_{j|k} = 1$ when $j$ and $k$ are in the same k-space column and $p_{j|k} = p_j$ otherwise, the mask is the type `1D column-wise'. The experiments in this paper focus on these two types of masks; other types are discussed in Section \ref{sec:discussion}. We emphasize that constraining a mask to a type does not constrain the $p_j$s, which define the variable sampling density. 

To ensure that $\pt_j < 1$ everywhere, we set $\pt_j = 1-\epsilon$ in the central 10 columns of k-space, where $\epsilon$ is a small real constant. The network architecture ensures that the central region is consistent  with the input, so $\epsilon$ can be small without penalty. We used $\epsilon = 10^{-3}$. 

In order to be a realistic simulation of prospectively sub-sampled data, the sampling set $\Omega_t$ must be fixed for all epochs. However, $\Lambda_t$ need not be. Therefore, we re-generated $M_{\Lambda_t}$ from the distribution of $\Mt$ once per epoch. Since the network sees more samples from the distribution of $\Mt$, the loss function is closer to \eqref{eqn:weighted_l2_rv}, so $f_{\tht}$ is expected to be a more accurate approximation of $\Eds [Y|\Yt]$. This has similarities with training data augmentation, as each slice is used to generate several inputs to the network \cite{yaman2020multi}.

\subsection{Comparative methods}

We trained Noisier2Noise using different weightings of the $\ell_2$ loss stated in \eqref{eqn:weighted_l2}. For each self-supervised method, we considered two possible estimates at inference: one with the doubly sub-sampled $\yt_s$ as the network input and the other with the singly sub-sampled $y_s$. The methods and their two estimates at inference are summarized in Table \ref{tab:meths}.

We trained with  $W= \mathds{1}$, referred to as ``Unweighted Noisier2Noise". By Claim \ref{clm:n2n}, Unweighted Noisier2Noise requires a $(\eye - K)^{-1}$ correction at inference: see Table \ref{tab:meths}. We have found that the need for correction substantially reduces the image quality compared to SSDU, so do not recommend using Unweighted Noisier2Noise in practice. Nonetheless, we include some  Unweighted Noisier2Noise results to illustrate the value of SSDU's loss weighting.

%We also trained with $W= (\mathds{1} - K)^{-1}$, referred to as ``K-weighted Noisier2Noise". The K-weighted version anticipates the correction term applied at inference, optimizing for the vector transformed according to \eqref{eqn:prop_inf}. Intuitively, the loss function compensates for the increased sensitivity to errors in regions of k-space with large $(1 - k_j)^{-1}$. This can also be interpreted as increasing the learning rate in regions of k-space where $(1 - k_j)^{-1}$ is large. 

\begin{figure}
\centering
\subfloat[An example $M_{\Omega_t}$ \label{subfig:singly}]{%
    \includegraphics[width=0.31\columnwidth]{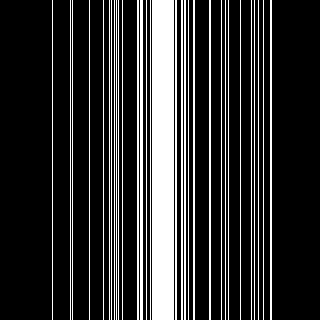}}
    \hspace{0.1cm}
\subfloat[$M_{\Lambda_t} M_{\Omega_t}$ for 1D partitioned SSDU and Noisier2Noise. \label{subfig:doubly_col}]{%
    \includegraphics[width=0.31 \columnwidth]{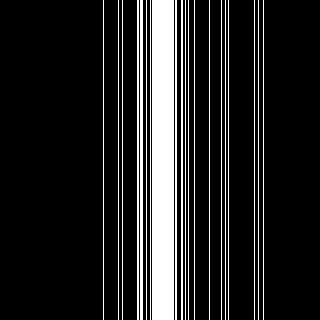}}
\hspace{0.1cm}
\subfloat[$M_{\Lambda_t} M_{\Omega_t}$ for 2D partitioned SSDU \label{subfig:doubly_bern}]{%
    \includegraphics[width=0.31 \columnwidth]{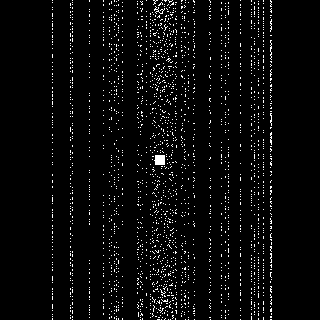}}
\caption{An example of the singly sub-sampled mask $M_{\Omega_t}$, and doubly sub-sampled $M_{\Lambda_t} M_{\Omega_t}$ with two $\Mt$ distribution types. Here, the acceleration factor of the first mask is $R_\Omega=4$ and the second is $\Rt = 2$.\label{fig:second_mask_types}}
\end{figure}

We also trained Noisier2Noise with $W = (\eye-\Mt)\Mo$ which, based on the relationship described in Section \ref{sec:SSDU}, we refer to as ``SSDU", despite some differences between our implementation and \cite{yaman2020self}. In \cite{yaman2020self}, a mixture of an $\ell_1$ and $\ell_2$ loss was used, whereas here, so that it can be directly compared with Unweighted Noisier2Noise, we used an $\ell_2$ loss. We also used a different $\Mo$ distribution, dataset and network architecture to \cite{yaman2020self}.

SSDU \cite{yaman2020self} was originally applied to an architecture that requires pre-computed sensitivity maps.  It was suggested that $M_{B_t}$ has a fully sampled $4 \times 4$ central region and 2D Gaussian variable density otherwise, so that high frequencies are sampled with higher probability. For the architecture considered in this paper, which has a coil sensitivity estimation module, we found that increasing the size of the fully sampled central region  considerably improved the method's performance. Since $M_\Omega$ has 10 fully sampled central columns, we increased the size of the central region of $M_\Lambda$ to $10 \times 10$. 

\begin{figure}
    \includegraphics[width=0.5\textwidth]{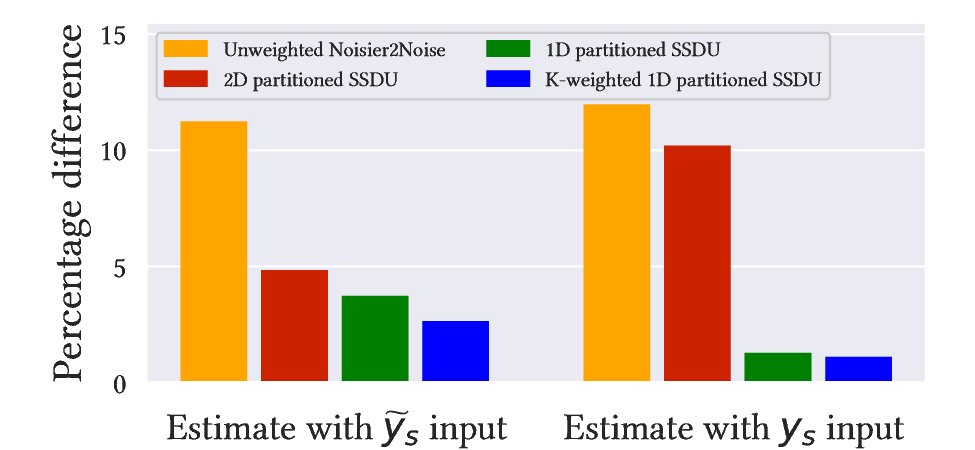}
    \caption{The mean test set NMSE percentage difference between fully supervised and each methods at $R_\Omega=8$ and a 1D distributed $\Mo$, where $\Rt$ has been tuned to minimize the test set NMSE. Fig. S1 shows a similar plot for $R_\Omega=4$. }
    \label{fig:bar_plots}
\end{figure} 

\begin{figure}
    \includegraphics[width=0.5\textwidth]{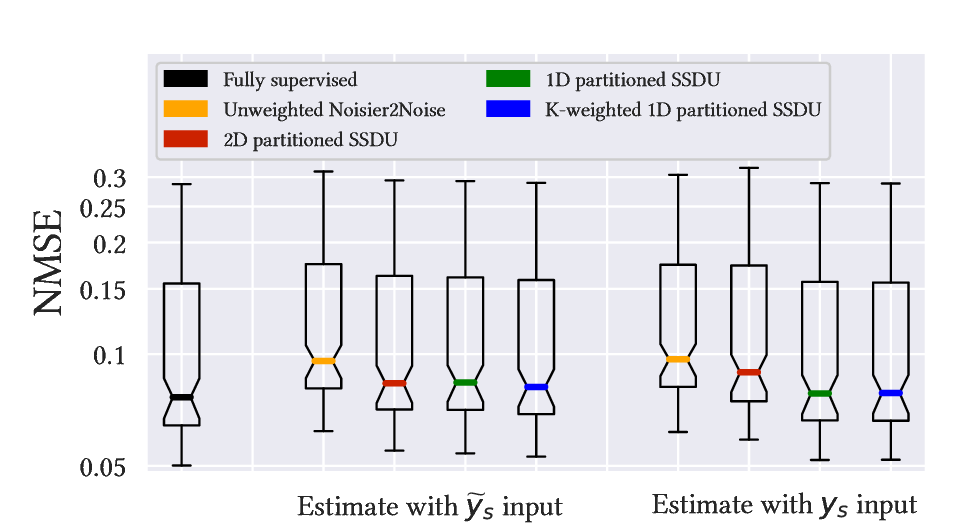}
    \caption{The NMSE for all methods at $R_\Omega=8$ and a 1D distributed $\Mo$, where $\Rt$ has been tuned to minimize the test set NMSE. Fig. S2 shows a similar plot for $R_\Omega=4$ and the exact numerical values are in Table S1. }
    \label{fig:box_plots}
\end{figure} 

\begin{figure}
    \centering
    \includegraphics[width=0.5\textwidth]{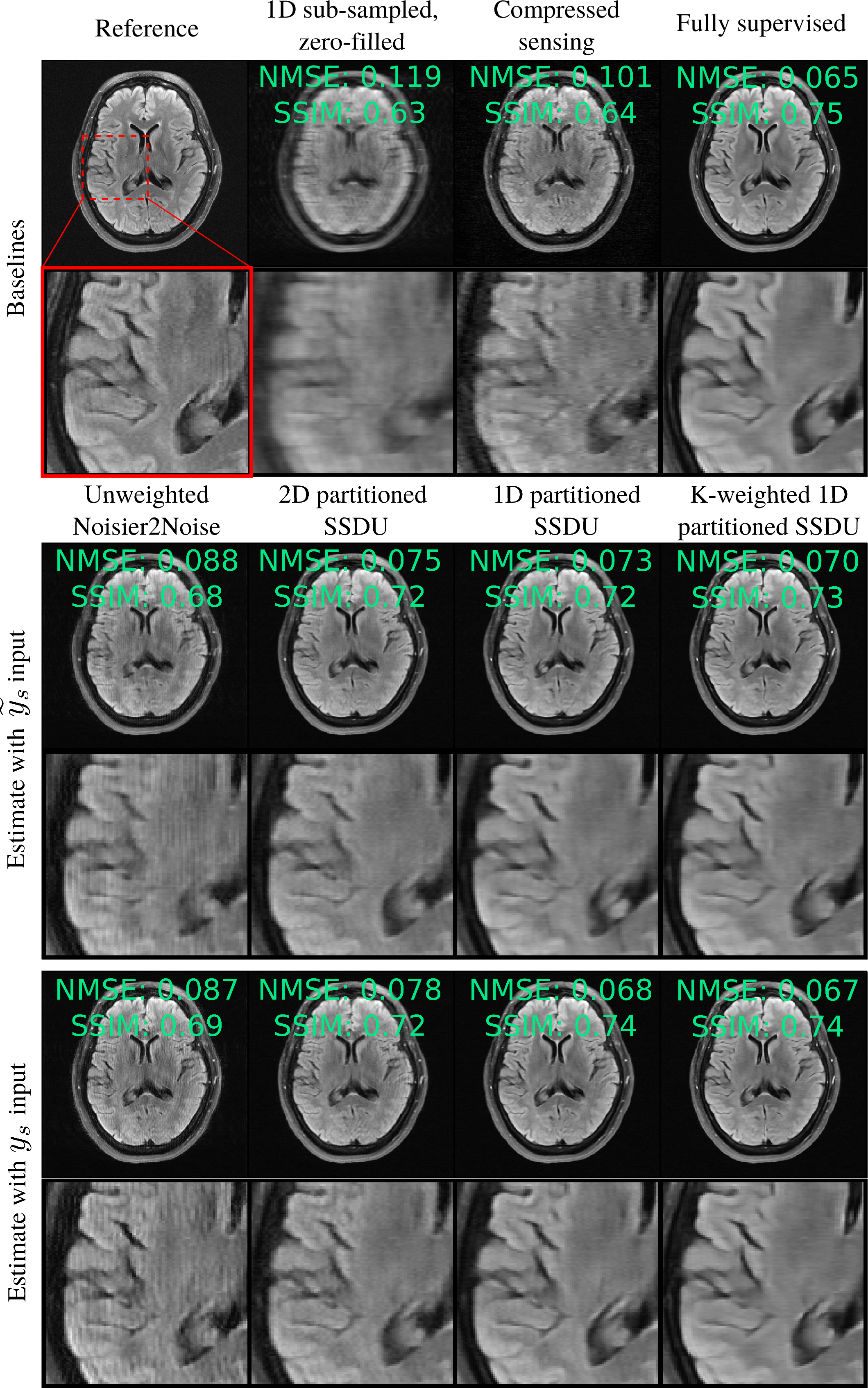}
    \caption{A reconstruction example with a 1D sub-sampled $\Mo$ and $R_\Omega=8$, with a $\Rt$ tuned to minimize the test set NMSE.  A similar figure for $R_\Omega=4$ is in the supplementary material, Fig. S3.}
    \label{fig:8x}
\end{figure}

As the probability of sampling each location in k-space is independent, the sampling set partition proposed in \cite{yaman2020self} is equivalent to a 2D variable density Bernoulli $\Mt$ distribution. To estimate their variable density distribution $\Pt$ we ran the SSDU authors' set partitioning code\footnote{\underline{https://github.com/byaman14/SSDU}} 1000 times on a fully sampled mask and averaged the result. We trained SSDU using a distribution of $\Mt$ of this type, referred to as ``2D partitioned SSDU", illustrated in Fig. \ref{subfig:doubly_bern}. We also trained SSDU using the same distribution type of $\Mt$ as $M_\Omega$, as in Fig. \ref{subfig:doubly_col}. We refer to this method as ``1D partitioned SSDU", or ``K-weighted 1D partitioned SSDU" when a $(\eye - K)^{- \frac{1}{2}}$ weighting is used in the loss as described in Section \ref{sec:KW-ssdu}.  Like Unweighted Noisier2Noise, $M_{\Lambda_t}$ was re-generated once per epoch \cite{yaman2020multi}. We emphasize that although 2D partitioned SSDU has a similar $\Mt$ distribution as in \cite{yaman2020self}, the distribution of $\Mo$ here is random variable density columns, not equidistant columns as in \cite{yaman2020self}. Therefore, 2D partitioned SSDU is  not necessarily expected  to perform as well as SSDU in \cite{yaman2020self}.

As a best-case target, we also trained using a fully supervised method with an (unweighted) $\ell_2$ loss. All deep learning methods had the same network architecture and training hyperparameters, as described in \ref{sec:net_arch}. 

Finally, as a comparative method that does not use deep learning, we ran a compressed sensing algorithm with a sparse model on  wavelet coefficients, which we implemented via the Berkeley Advanced Reconstruction Toolbox (BART) \cite{uecker2016bart}. We used BART's default settings with fourth-order Daubechies wavelets and a sparse weighting of $\lambda = 2 \times 10^{-3}$.

\subsection{Quality metrics}

To evaluate the reconstruction quality, we computed the Normalized Mean Squared Error (NMSE) in k-space on the test set: $\|\hat{y}_s - y_{0,s}\|^2_2/\|y_{0,s}\|^2_2$. We also computed the image-domain root-sum-of-squares (RSS), $\hat{x}_s = (\sum_c | F^H y_{s,c} |^2)^{1/2}$ where $y_{s,c}$ is the k-space entries on coil $c$ and $F$ is the discrete Fourier transform, cropped the RSS estimate to a central $320 \times 320$ region and computed  the SSIM, as suggested in fastMRI \cite{zbontar2018fastmri}.% and High-Frequency Error Norm (HFEN) \cite{ravishankar2010mr}. Following \cite{ravishankar2010mr}, the HFEN was computed by applying a $15 \times 15$ Laplacian of Gaussian filter with standard deviation $1.5$ to the image-domain residual and computing the $\ell_2$ norm.  

\begin{figure}
    \centering
    \includegraphics[width=0.48\textwidth]{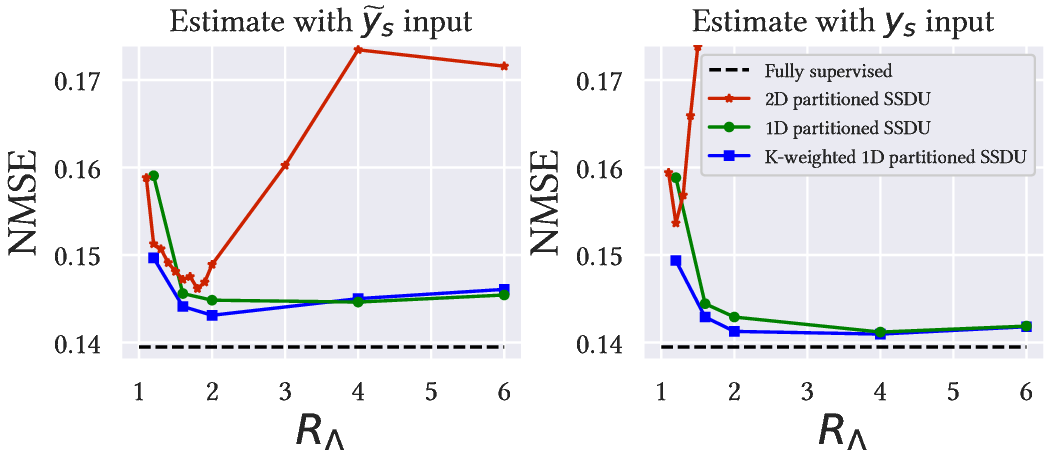}
    \caption{The dependence of the test set NMSE on the acceleration factor of the second mask $\Mt$, denoted as $\Rt$, at $R_\Omega=8$ for both outputs. 1D partitioned SSDU is far more robust to the tuning of $R_\Lambda$ than 2D partitioned SSDU. Fully supervised learning does not use a second mask $\Mt$, so has the same performance for all $\Rt$. A similar figure for $R_\Omega=4$ is in the supplementary material, Fig. S3.}
    \label{fig:line_4x}
    %\vspace{-0.5cm}
%    \centering
%    \includegraphics[width=0.48\textwidth]{figs/line8x.eps}
%    \caption{The dependence of the test set NMSE on $\Rt$ at $R=8$. }
%    \label{•}el{fig:line_8x}
\end{figure} 

\begin{figure}[t]
    \centering
    \includegraphics[width=0.5\textwidth]{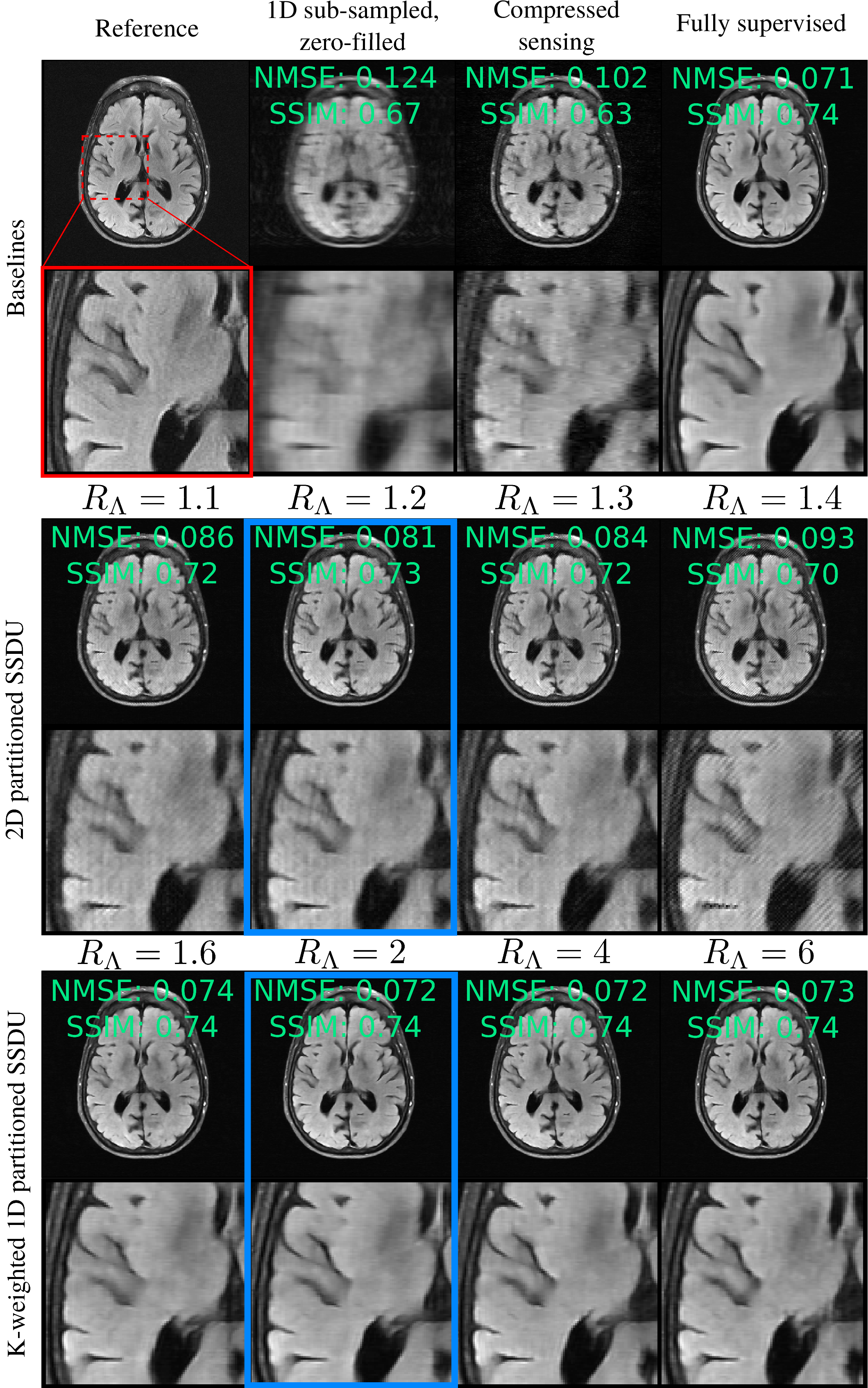}
    \caption{Robustness to $R_\Lambda$, where the blue box highlights the case where $R_\Lambda$ is tuned. K-weighted 1D partitioned SSDU is very robust to $R_\Lambda$, with very similar restoration quality for all $R_\Lambda$ between 1.6 and 6. 2D partitioned SSDU is far more sensitive, with substantial degradation in image quality for mistunings as small as 0.1.  Here, we show the estimate with $y_s$ input only.}
    \label{fig:8x_robustness}
    % \vspace{-0.5cm}
\end{figure}

\begin{figure}
        \includegraphics[width=0.5\textwidth]{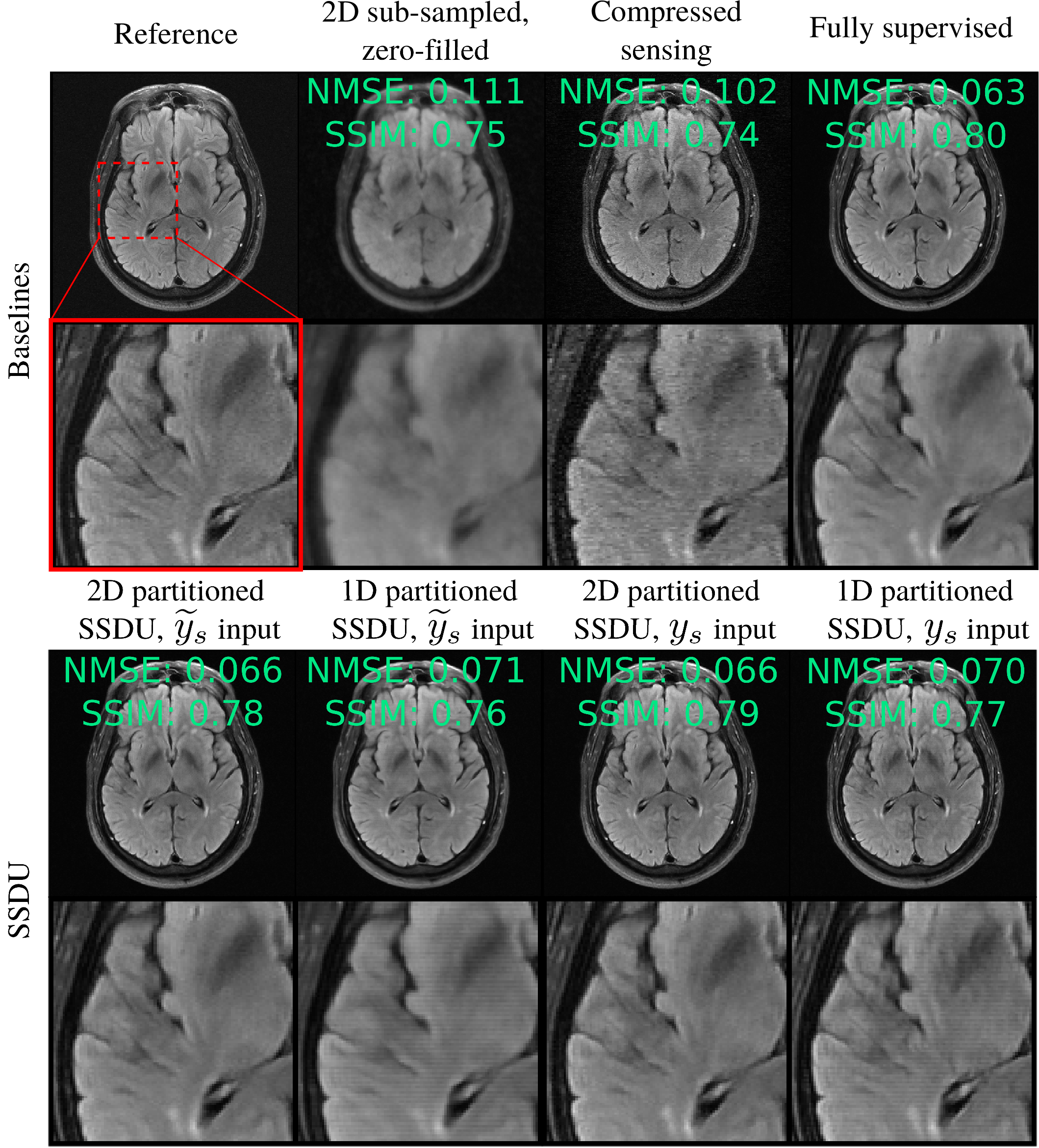}
    \caption{A reconstruction example from the brain fastMRI dataset with a 2D Bernoulli distributed $M_\Omega$ and $R_\Omega=8$. Compared to Fig. \ref{fig:8x}, the comparative performance of the SSDU algorithms are switched: here, 2D partitioned SSDU performs similarly to fully supervised training, while 1D partitioned SSDU suffers from streaking artifacts.}
    \label{fig:8x_bern}
    %\vspace{-0.5cm}
\end{figure}

\begin{figure}[t]
    \centering
        \includegraphics[width=0.5\textwidth]{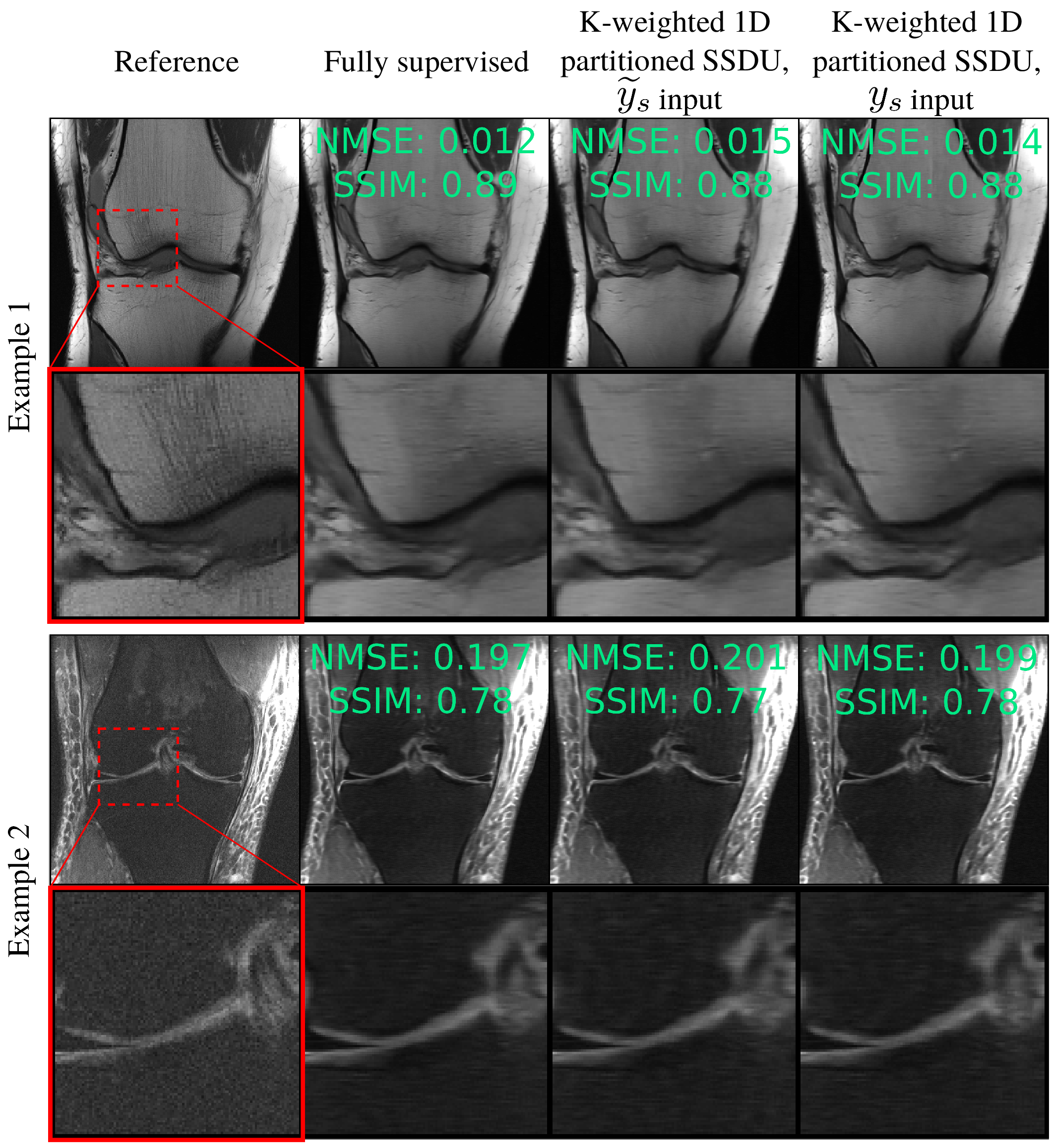}
    \caption{Two reconstruction examples of K-weighted 1D partitioned SSDU from the knee fastMRI dataset, where $\Mo$ is 1D.  As in Fig. \ref{fig:8x}, K-weighted 1D partitioned SSDU's restoration quality is very similar to fully supervised training.}
    \label{fig:8x_knee}
    %\vspace{-0.5cm}
\end{figure}

%\begin{figure*}[ht!]
%    % \includegraphics[width=1\textwidth]{figs/8x_example_mistuned.eps}
%    \caption{The same slice from the second example in Fig. \ref{fig:8x}, but with  $\Rt$ mistuned to be 2 higher than that which minimized the NMSE. 1D partitioned SSDU is more robust than competing methods, continuing to performing comparably with fully supervised training.}
%    \label{fig:8x_mistuned}
%\end{figure*}     

\section{Results}
\label{sec:results}

For brevity, the results presented here focus on $R_\Omega=8$. Similar results for the brain data at $R_\Omega=4$ are shown in the supplementary material: see figures S1-S4. 

For the brain data, we evaluated the dependence of the methods' performance on the distribution of $\Mt$ by varying the parameters so that the sub-sampling factor $\Rt$ changed. We trained with $\Rt \in \{1.2, 1.6, 2, 4, 6\}$, except for 2D partitioned SSDU, which we found needed finer tuning and a smaller $\Rt$ for the best performance, so we trained with $\Rt \in \{1.1, 1.2, \ldots, 2, 3, 4, 6\}$.

\subsection{Performance with tuned $\Rt$ \label{sec:tunedR}}
This section focuses on the case where $\Rt$ has been tuned to minimize the ground truth test set NMSE. Figures \ref{fig:bar_plots} and S1 show bar charts of the percentage difference between fully supervised training and each method: $(\mu - \mu_{full})/\mu_{full}$ where $\mu$ and $\mu_{full}$ are the mean NMSE of interest and mean NMSE of fully supervised training respectively. The best performance was for K-weighted 1D partitioned SSDU with a $y_s$ input; its mean NMSE was only 1.1\% and 0.8\% larger than fully supervised for $R_\Omega = 8, 4$ respectively. Figures \ref{fig:box_plots} and S2 show box plots of the NMSE of each method for $R_\Omega = 8$ and $R_\Omega=4$ respectively: see Table S1 of the supplementary material for the numerical values. 

To evaluate whether the proposed changes to SSDU were statistically significant, we performed a one-sided Wilcoxon signed-rank test with $p$-value 0.01 on the test set NMSEs.  For both the $y_s$ and $\yt_s$ inputs, we found that there was a significant statistical difference between 2D and 1D partitioned SSDU. We also found that the difference between 1D partitioned SSDU and K-weighted 1D partitioned SSDU was statistically significant. 

Figures \ref{fig:8x} and S3 show RSS estimates from the test set at $R_\Omega=8$ and $R_\Omega=4$ respectively. Qualitatively, K-weighted 1D partitioned SSDU performs the most similarly to fully supervised training. Although 2D partitioned SSDU has a competitive quantitative score for the estimate with $\yt_s$ input, it exhibits some streaking artifacts. 

Unweighted Noisier2Noise's performance was substantially worse than SSDU. Therefore we compare SSDU and its modifications only in the remainder of this paper. 

\subsection{Robustness to $\Rt$} 

For actual, prospectively sampled data, it would not be possible to tune $R_\Lambda$ on the ground truth test set NMSE. The practicality of SSDU therefore depends greatly on the robustness to $R_\Lambda$. Figures \ref{fig:line_4x} and S4 show the dependence of the mean test set NMSE on $\Rt$ for $R_\Omega = 8$ and $R_\Omega = 4$ respectively. K-weighted 1D partitioned SSDU was the most robust to the tuning of $\Rt$. 2D partitioned SSDU was the least robust, especially for the estimate with $y_s$ input. This is visualized in Fig. \ref{fig:8x_robustness}, which shows reconstruction examples for a number of $R_\Lambda$s. K-weighted 1D partitioned SSDU performs very similarly for all $R_\Lambda$s between 1.6 and 6, while 2D partitioned SSDU's restoration quality significantly degrades qualitatively and quantitatively for mistunings as small as $0.1$.

\subsection{Performance on 2D sampled brain data}

To further evaluate the role of the partitioning distribution, we also ran 1D and 2D partitioned SSDU on the brain data with a 2D Bernoulli sampled $\Mo$. In this case, the type matching of the second mask to $\Mo$ is switched: 2D partitioned SSDU's second mask has the same type of distribution as the first, while 1D partitioned SSDU has a different type. For $\Mo$, we used a fully sampled $10\times 10$ central region and a polynomial variable density that samples low frequencies with higher probability otherwise. We used $\Rt = 1.2$ and $\Rt = 4$ for 2D and 1D partitioned SSDU respectively. All other hyperparameters and network specifics were unchanged. 

In this case, the best performance was 2D partitioned SSDU, which performed very similarly to fully supervised training: see Fig. \ref{fig:8x_bern}. The $\yt_s$ input had a mean test set NMSE of 0.141 and 0.144 for 2D and 1D partitioned SSDU respectively, and the $y_s$ input had 0.141 and 0.145, compared with 0.139 for fully supervised training. Although not shown in Fig. \ref{fig:8x_bern} for brevity, we also trained 2D partitioned SSDU with a $(\eye - K)^{-\frac{1}{2}}$ loss weighting. As for 1D partitioned SSDU in Section \ref{sec:tunedR}, we found that this reduced the mean NMSE further to 0.140 for both the $y_s$ and $\yt_s$ input.

\subsection{Performance on 1D sampled knee data}

We also trained K-weighted 1D partitioned SSDU on the fastMRI knee data with the same  network architecture, training hyperparameters, and a 1D distributed $M_\Omega$. The sub-sampling factor of the first and second masks were $R_\Omega=8$ and $\Rt = 2$ respectively. The mean test set NMSE was  0.233 and 0.231 for the estimates with $\yt_s$ and $y_s$ inputs respectively, compared with 0.230 for fully supervised training. Fig. \ref{fig:8x_knee} shows two example reconstructions from the test set, demonstrating competitive performance with fully supervised training qualitatively.

\section{Discussion \label{sec:discussion}}

%Fig. \ref{fig:line_4x} shows that Unweighted and K-weighted Noisier2Noise's best performance was at $\Rt = 4$. For $\Rt$ less than $4$, the correction term is large, and  Noisier2Noise performs poorly, especially the unweighted variant. For instance, when $\Rt = 2$, the maximum value of $(1-k_j)^{-1}$, which is at the corner of k-space, is $141$, so is very sensitive to inaccuracies in this region. For larger $\Rt$, the correction term is smaller and the reconstruction is less sensitive to errors, but there are fewer sampled locations in $\yt_t$ so the network is less accurate. 

Due to its need for correction at inference, Unweighted Noisier2Noise had consistently the worst score. We therefore do \textit{not} recommend using Unweighted Noisier2Noise in practice. Rather, we suggest using a variant of SSDU, which has a loss weighting that removes the need for such a correction.   

% Since SSDU does not need correction at inference, its optimal $\Rt$ is much lower: $\Rt = 1.6$ and $\Rt = 2$ for $R=4$ and $R=8$ respectively for both variants. For very low $\Rt = 1.2$, SSDU does not perform well. This is because the doubly sub-sampled data is likely to be identical to the singly sub-sampled data, implying that the network trains very slowly.  

The hierarchy of 1D and 2D partitioned SSDU depends on the distribution of $\Mo$. In particular, the best performance was when they are both 1D or both 2D. It is conventional wisdom that better reconstruction quality is possible  when k-space is randomly sub-sampled in both spatial dimensions (see, for instance, \cite{deshpande2012optimized}). This is because the image-domain aliasing is incoherent in both dimensions, so is easier to remove. The superior performance of 1D partitioned SSDU compared with 2D partitioned SSDU when $\Mo$ is 1D shows that it is \textit{not} necessarily true that the sampling set partition should also ideally be two-dimensional. Rather, better performance is possible when the distribution of $\Mo$ and $\Mt \Mo$ are of the same type. % This is supported by the results  in Fig. \ref{fig:8x_bern}, where 2D partitioned outperforms 1D partitioned SSDU when $\Mo$ is 2D, so is switched compared to when $\Mo$ is 1D.  

% Intuitively, when $\Lambda_t$ matches $\Omega_t$, the network cannot distinguish between their contributions to the aliasing, so the loss is minimized when the aliasing due to \textit{both} is removed. If they are mismatched, a network could in principle remove the aliasing due to $\Lambda_t$ but not $\Omega_t$, so minimizing the loss does not necessarily successfully recover the alias-free image. 

To see why, consider the nature of the aliasing caused by sub-sampling and further sub-sampling k-space, focusing on the example of a random 1D column sampled $\Mo$. Such sampling causes the image-domain aliasing to be horizontally incoherent and vertically coherent. With a 1D column-wise $\Lambda_t$, further horizontal aliasing is introduced. Since the network cannot distinguish between the horizontal aliasing caused by $\Omega_t$ or $\Lambda_t$, the loss is minimized when the aliasing due to \textit{both} is removed. On the other hand, a 2D $\Lambda_t$ introduces some aliasing that is orthogonal to the original aliasing, which is distinguishable in principle. In this case, the loss is minimized when the network removes the aliasing caused by $\Lambda_s$, but not necessarily the original aliasing caused by $\Omega_s$. % In k-space, the network is trained to ``fill in" rows, without necessarily reconstructing missing entries. 
% Therefore the former network is trained to remove horizontal aliasing, as required, while the latter is only guaranteed to remove aliasing orthogonal to the original aliasing. 
This is visible in figures \ref{fig:8x} and  \ref{fig:8x_bern}, where SSDU fails to completely remove artifacts caused by $\Mo$ when $\Mt$ does not have the same type of distribution.

This implies that, in general, better performance is possible when the distribution of the aliasing of $\yt_t$ and $y_t$ are of the same type.   For both the independent 1D column sampling and 2D Bernoulli sampling considered here, this can be achieved by choosing a $\Mt$ with the same type of distribution as $\Mo$. Recently, in \cite{Blumen2022}, this was also observed empirically for SSDU with random spoke sampling. However, such a procedure does not always achieve this goal. For instance, while the SSDU paper \cite{yaman2020self} considers a fully sampled central region and equidistant column sampling, recovery of images with regular under-sampling is not currently considered in the proposed framework. In this case, a $\Lambda_t$ of the same type would not give a $\yt_t$ with the same aliasing type as $y_t$. The 2D Gaussian variable density partition employed in this paper was originally constructed to handle such sampling patterns, and was found to perform very well in this context. Future work includes establishing the correct sampling set partitions for $\Mo$ distributions not in  \cite{yaman2020self} or covered by the approach suggested here.

We found that K-weighted SSDU further improved the image quality and robustness to $R_\Lambda$. Consider the $j$th entry of the (squared) weighting $(\eye - K)^{-1}$ in terms of sampling probabilities: 
\begin{equation*}
(1 - k_j)^{-1} = \frac{1 - \pt_j p_j}{p_j(1 - \pt_j)} = \frac{\Pds(j \notin \Lambda \cap \Omega) }{\Pds(j \in \Omega \setminus
 \Lambda) }.
\end{equation*}
This leads to the following intuitive interpretation of the proposed loss weighting as compensation for the variable density of $\Omega$ and $\Lambda$. A smaller denominator $\Pds(j \in \Omega \setminus \Lambda)$ implies that the $j$th location occurs less frequently in the loss, which is compensated for by an increased weighting. A smaller numerator $\Pds(j \notin \Lambda \cap \Omega)$ implies that the $j$th location is estimated by the network less frequently, so has a decreased weighting.  

The benefit of the $(\eye - K)^{-1}$ weighting  highlights and addresses a general challenge of self-supervised learning with variable density sampling: regions of k-space sampled with lower probability are underrepresented in the loss. This issue has been noted in other works. For instance, for variable density reconstruction with Noise2Noise, \cite{gan2022self} suggests weighting the loss function by the sampling density.  An alternative approach was suggested in \cite{liu2022iterative}, which suggests passing the training target through the network before it is employed in the loss function.  We note that if the sampling and partitioning had uniform density, such as in \cite{yaman2020multi}, $K$ would also be uniform, so the proposed weighting would not be required. This may explain in part the empirical performance observed in \cite{yaman2020multi}.  

% Intuitively, 1D partitioned SSDU gives better reconstruction quality because is not possible for the network to determine whether a zeroed k-space entry was masked by $M_{\Omega_t}$ or $M_{\Lambda_t}$, so minimizes its loss by learning to map to all of k-space. For 2D partitioned SSDU, the location of the zeros of $y_t$ \textit{can} be deduced from $\yt_t$, so the network could learn to reconstruct columns sampled in $y_t$ only. 

When $\Mo$ was 1D, with the exception of 2D partitioned SSDU, Fig. \ref{fig:line_4x} shows that the estimate with $y_s$ input performed similarly or better than with $\yt_s$  input when $\Rt$ is tuned. This indicates that, for these methods, the advantage of using all the data in the input to the network outweighs the disadvantage that the input data has a different sampling distribution to the training data so is not guaranteed by Claim \ref{clm:n2n} or \ref{clm:ssdu} to be correct in expectation. Heuristically, when $\Mo$ and $\Mt \Mo$ are both variable density column-wise sampled, a network trained on doubly sub-sampled data is likely to also be able to handle singly sub-sampled data.  However, for 2D partitioned SSDU, $\Mt \Mo$ is no longer column-wise, see Fig. \ref{fig:second_mask_types}c. Accordingly, 2D partitioned SSDU was the only method that had a higher NMSE for the $y_s$ input compared to the $\yt_s$ input. 

The best $\Rt$ for 2D partitioned SSDU was lower than competing methods: $\Rt = 1.8$ and $\Rt = 1.2$ for the $y_s$ and $\yt_s$ inputs respectively. In \cite{yaman2020self}, the sampling set partition was quantified in terms of the ratio $\rho = |A_t|/ |B_t|$,  and it was found that $\rho = 0.4$ offered the best performance. Since the $\Mo$ distributions are different here, the optimal $\rho$ is not expected to necessarily be the same. For 2D partitioned SSDU $\Rt = 1.8$ and $\Rt = 1.2$ corresponds to $\rho = 0.52$ and $\rho = 0.21$ respectively, while for the other methods's best performance at $\Rt = 4$ corresponded to $\rho = 0.57$. Therefore the $\rho$ were reasonably similar despite the substantial difference in $\Rt$.

Since the network architecture uses $\yt_t$ in its coil sensitivity estimation module, not $y_t$, it is plausible that the differences between 1D and 2D partitioning could be due to poorer coil sensitivity estimation rather than an intrinsic property of the partition change. To examine this, we re-trained tuned 1D and 2D partitioned SSDU on the 1D sampled brain data with k-space masked to a central $10 \times 10$ region in the coil sensitivity estimation module. We found that the test set NMSE was within 1\% of the usual approach. This verifies that the performance improvement was indeed a consequence of the partition change, not simply a consequence of specifics of the architecture.

Unweighted Noisier2Noise's correction at inference $(\eye - K)^{-1}$ is only valid when an $\ell_2$ loss is used;  we have found that other loss functions do not perform well in practice. This loss leads to smoothing artifacts, even for fully supervised training. For SSDU, since there is no correction term, loss functions other than $\ell_2$ are possible. For instance,  in \cite{yaman2020self}, a mixture of $\ell_2$ and $\ell_1$ was used. Better visual quality may be achievable when SSDU is implemented with a different loss; we do not suggest using an $\ell_2$ loss in general, it is only required here so that it can be compared directly with Noisier2Noise.

For all self-supervised methods in this work, we re-generated $\Lambda_t$ once per epoch. This has similarities to the multi-mask SSDU approach proposed in \cite{yaman2020multi}. However, in \cite{yaman2020multi}, a fixed number $n_\Lambda$  of $\Lambda_t$s were generated for each $\Omega_t$, each of which were treated as an additional member of the training set. Therefore, unlike in this paper, each epoch was $n_\Lambda$ times as long. Future work includes establishing whether it is also advantageous to limit the number of unique $\Lambda_t$s per $\Omega_t$ for the approach considered in this paper.

All methods in this paper were trained without taking measurement noise into account \cite{desai2021noise2recon, chen2022robust}. Recent work by the present authors has shown that the additive and multiplicative versions of Noisier2Noise can be combined to recover higher fidelity images than SSDU in the presence of noise \cite{SimN2N}.

\section{Conclusions and future work}

Based on the observation that SSDU is a version of Noisier2Noise with a particular rank-deficient loss weighting, we proved that SSDU correctly estimates $Y_0$ in expectation.  This analysis led to two proposals that we found significantly improved SSDU's performance in practice. Firstly, we propose employing a distribution of $M_\Lambda \Mo$ that is the same type as the original mask $M_\Omega$. Secondly, we propose introducing a weighting of $(\eye - K)^{-\frac{1}{2}}$ in SSDU's loss. We found that that each of these modifications significantly improved SSDU's test set NMSE and robustness to $\Rt$. 

There are a number of other self-supervised learning methods that also use sampling set partitioning \cite{yaman2020multi, hu2021self, zou2022selfcolearn}, some of which are variants of SSDU. For instance, \cite{hu2021self, zou2022selfcolearn, wang2022parcel} propose training two networks in parallel, one for each sampling subset, with a loss function that includes the difference between the outputs of the two networks. Another recent development is zero-shot SSDU \cite{yaman2021zero}, which shows that sampling set partitioning can also be applied to recover images without a training dataset \cite{ulyanov2018deep}. Future work includes determining whether the theoretical and practical developments of this paper can be extended to these methods.

\section{Appendices}

\subsection{Proof of variable density Noisier2Noise \label{app:expec_proof}}

This appendix proves that when $p_j \neq 0$ and $\pt_j \neq 1$ for all $j$,
\begin{align}
\mathds{E}[ Y_0 | \Yt ] = (\mathds{1} - K)^{-1} (\mathds{E}[Y | \widetilde{Y}] - K \Yt ), \label{eqn:ey0}
\end{align}
where $K =  (\mathds{1} - \Pt P)^{-1}(\mathds{1} - P)$ for $P = \Eds[\Mo]$ and $\Pt = \Eds[\Mt]$.

\begin{proof}
This proof is based on Section 3.4 of Noisier2Noise \cite{moran2020noisier2noise}, but with  more mathematical detail and generalized to variable density sampling.  Following the compressed sensing literature, this paper uses $p_j$ to refer to the probability that the $j$th location in k-space is sampled. This differs to \cite{moran2020noisier2noise}, which uses $p$ to denote the probability that a pixel is \textit{zeroed}. 

We wish to compute $\Eds[Y_j|\Yt_j]$ as a function of $\Eds [Y_{0,j} | \Yt_j]$. To do this, we split $\Eds[Y_j|\Yt_j]$ into two cases, for conditions $\Yt_j \neq 0$ or $\Yt_j = 0$, and subsequently construct an expression that is consistent with both.

\textit{Case 1} ($\Eds[Y_j|\Yt_j \neq 0]$): By the measurement model $\Yt = \Mt Y = \Mt \Mo Y_0$, the singly sub-sampled $Y_j$ must take the same value as $\Yt_j$ when $\Yt_j \neq 0$. Therefore 
\begin{equation}
\Eds[Y_j|\Yt_j \neq 0] = \Yt_j. \label{eqn:case1end}
\end{equation}

\textit{Case 2} ($\Eds[Y_j|\Yt_j = 0]$): Using the partition theorem for expectations, we write $\Eds[Y_j|\Yt_j = 0]$ as the weighted sum of $\Eds [Y_{j} | \Yt_j = 0 \cap Y_j = 0 ]$ and $\Eds [Y_{j} | \Yt_j = 0 \cap Y_j \neq 0 ]$:
\begin{align}
    \Eds[Y_j|\Yt_j = 0] =& \Eds [Y_{j} | \Yt_j = 0 \cap Y_j = 0 ] \cdot k_j \nonumber \\ 
    & + \Eds [Y_{j} | \Yt_j = 0 \cap Y_j \neq 0 ] \cdot (1-k_j), \label{eqn:Eyyt0}
\end{align}
where we define $k_j = \Pds [Y_j = 0 |\Yt_j = 0 ]$. Evaluating each of the terms on the right-hand-side of \eqref{eqn:Eyyt0} in turn: 
\begin{itemize}
\item $\Eds [Y_{j} | \Yt_j = 0 \cap Y_j = 0 ]$: Since the random variable $Y_{j}$ is conditionally zero, its expectation is also zero:
\begin{equation*}
	\Eds [Y_{j} | \Yt_j = 0 \cap Y_j = 0 ] = 0.
\end{equation*}
\item $\Eds [Y_{j} | \Yt_j = 0 \cap Y_j \neq 0 ]$: The measurement model implies that when $Y_j$ is non-zero, and therefore unmasked, it takes the value of  $Y_{0,j}$. Therefore its expectation can be written in terms of the expectation of $Y_{0,j}$:
\begin{equation}
	\Eds [Y_{j} | \Yt_j = 0 \cap Y_j \neq 0 ] = \Eds [ Y_{0,j} | \Yt_j = 0].
\end{equation}
\item $k_j$: By the definition of conditional expectation:
 \begin{align*}
    k_j = \Pds [Y_j = 0 |\Yt_j = 0 ] 
    = \frac{ \Pds [Y_j = 0 \cap \Yt_j = 0 ]}{\Pds [\Yt_j = 0] }.
\end{align*}
 The numerator is 
 \begin{align*}
 \Pds [Y_j = 0 \cap \Yt_j = 0 ] &= \Pds [Y_j = 0] \\ &= 1 - p_j,
\end{align*} 
where $p_j = \Pds [ Y_j \neq 0] = \Eds [M_{\Omega,jj}] $ is the probability that $j \in \Omega$. By the partition theorem, the denominator is 
 \begin{align*}
    \Pds [\Yt_j = 0] =& \Pds [\Yt_j = 0 | Y_j = 0]\Pds[Y_j = 0] \\
    &+ \Pds [\Yt_j = 0 | Y_j \neq 0]\Pds[Y_j \neq 0] \\
    =&  1 \cdot (1 - p_j) + (1 -  \pt_j) p_j \\
    =& 1 - \pt_j p_j,
\end{align*}
 where $\pt_j = \Pds [ \Yt \neq 0] = \Eds [M_{\Lambda, jj}] $. Therefore
 \begin{align}
     k_j = \Pds [Y_j = 0 |\Yt_j = 0 ] = \frac{1 - p_j}{1 - \pt_j p_j}. \label{eqn:k_def}
\end{align}
\end{itemize}
Substituting the above results into \eqref{eqn:Eyyt0} gives
\begin{align}
    \Eds[Y_j|\Yt_j = 0] = \Eds [Y_{0,j} | \Yt_j = 0] (1 - k_j), \label{eqn:case2end}
\end{align}
where $k_j$ is defined in \eqref{eqn:k_def}.

\textit{Combining Cases 1 and 2} ($\Eds[Y_j|\Yt_j]$): To find $\Eds[Y_j|\Yt_j]$, one must construct an expression that holds for both  \eqref{eqn:case1end} and \eqref{eqn:case2end}. Consider the following candidate:
\begin{align}
    \Eds[Y_j|\Yt_j]  = (1 - k_j) \Eds [Y_{0,j} | \Yt_j] + k_j \Yt_j. \label{eqn:Ey_yt}
\end{align}
This expression can be verified as consistent with \eqref{eqn:case1end} by setting $\Yt_j \neq 0$: 
\begin{align*}
	\Eds[Y_j|\Yt_j \neq 0] &= (1 - k_j) \Eds [Y_{0,j} | \Yt_j \neq 0]+ k_j \Yt_j \\
	&= (1 - k_j) \Yt_j + k_j \Yt_j \\
	&= \Yt_j,
\end{align*}
as required. Secondly, setting $\Yt_j = 0$ gives 
\begin{align*}
	\Eds[Y_j|\Yt_j = 0] &= (1 - k_j) \Eds [Y_{0,j} | \Yt_j = 0]  + k_j \cdot 0 \\
	&= (1 - k_j) \Eds [Y_{0,j} | \Yt_j = 0],
\end{align*}
as required by \eqref{eqn:case2end}. Therefore \eqref{eqn:Ey_yt} is consistent with both \eqref{eqn:case1end} and \eqref{eqn:case2end}, so is a correct expression for $\Eds[Y_j|\Yt_j]$.

When $1 - k_j \neq 0$ we can rearrange \eqref{eqn:Ey_yt} for $\Eds [Y_{0,j} | \Yt_j]$:
\begin{align}
	\mathds{E}[ Y_{0,j} | \Yt_j ] = (1 - k_j)^{-1} (\mathds{E}[ Y_j | \widetilde{Y}_j] - k_j \Yt_j). \label{eqn:exp_kj}
\end{align}
By the expression for $k_j$ given in \eqref{eqn:k_def},  $1 - k_j$ is
\begin{align*}
1 - k_j & = 1 - \frac{1 - p_j}{1 - \pt_j p_j}  = \frac{p_j(1 - \pt_j)}{1 - \pt_j p_j},
\end{align*}
so is non-zero when $p_j \neq 0$ and $\pt_j \neq 1$. Writing \eqref{eqn:exp_kj} in terms of vectors and matrices yields  \eqref{eqn:ey0}, as required.
\end{proof}
\subsection{Proof of SSDU \label{app:ssdu}}

This appendix proves that  a network trained with SSDU's loss weighting $(\eye - \Mt)\Mo$ satisfies
\begin{align}
(\eye - K)(\eye -\Mt \Mo) ( f_{\theta^*}(\Yt) - \Eds [Y_0 | \Yt] )  = 0.
	% (\eye -\Mt \Mo)f_{\theta^*}(\Yt) = (\eye -\Mt \Mo)\Eds [Y_0 | \Yt]. \label{eqn:appBgoal}
\end{align}
\begin{proof}  By \eqref{eqn:whw_der}, the minimum of SSDU's loss function \eqref{eqn:LSSDU} gives a function that satisfies
\begin{align}
\Eds [(\eye - \Mt)\Mo(f_{\theta^*}(\Yt) -Y) | \Yt] = 0 \label{eqn:bob}
\end{align}
Similarly to Appendix A, the following derives expressions for $\Eds [(\eye - \Mt)\Mo(f_{\theta^*}(\Yt) -Y) | \Yt]$ under two conditions, $\Yt_j \neq 0$ and $\Yt_j = 0$, and subsequently find an expression that is true for both. In the following, $\mt_j$ and $m_j$ are the $j$th diagonals of $\Mt$ and $\Mo$ respectively. 

\textit{Case 1} ($\Eds [(1- \mt_j)m_j(f_{\theta^*}(\Yt)_j -Y_j) | \Yt_j \neq 0]$):
When $\Yt_j \neq 0$, the $j$th entry is not masked: $\mt_j = 1$. Therefore $(1 - \mt_j)m_j = 0$ and the expression is zero:
\begin{align}
\Eds [(1 - \mt_j)m_j(f_{\theta^*}(\Yt)_j -Y_j) | \Yt_j \neq 0] = 0. \label{eqn:app_B_cs_1}
\end{align}

\textit{Case 2} ($\Eds [(1- \mt_j)m_j(f_{\theta^*}(\Yt)_j -Y_j) | \Yt_j = 0]$): When $\Yt_j = 0$, $\mt_jm_j = 0$, so $(1 - \mt_j)m_j = m_j$. Therefore
\begin{multline}
	\Eds [(1 - \mt_j)m_j (f_{\theta^*}(\Yt)_j - Y_j)| \Yt_j = 0] \\ = \Eds [m_j (f_{\theta^*}(\Yt)_j - Y_j)| \Yt_j = 0]. \label{eqn:ll}
\end{multline}
As for Case 2 of Appendix A, we can use the partition theorem to express \eqref{eqn:ll} as a weighted sum:
\begin{align}
 \Eds [  & m_j   (  f_{\theta^*}(\Yt)_j  - Y_j)| \Yt_j = 0]  \nonumber \\ & =  \Eds [m_j (f_{\theta^*}(\Yt)_j - Y_j)| \Yt_j = 0 \cap Y_j = 0 ] \cdot k_j \nonumber \\ 
    & + \Eds [m_j (f_{\theta^*}(\Yt)_j - Y_j) | \Yt_j = 0 \cap Y_j \neq 0 ] \cdot (1 - k_j), \label{eqn:al}
\end{align}
where $k_j = \Pds [Y_j = 0 |\Yt_j = 0 ]$ as in Appendix A, given in \eqref{eqn:k_def}. Taking each term in turn:
\begin{itemize}
\item $\Eds [m_j (f_{\theta^*}(\Yt)_j - Y_j)| \Yt_j = 0 \cap Y_j = 0 ]$: Since $Y_j = 0$ when it is zeroed by the mask, $m_j = 0$. Therefore
\begin{equation*}
\Eds [m_j (f_{\theta^*}(\Yt)_j - Y_j)| \Yt_j = 0 \cap Y_j = 0 ] = 0.
\end{equation*}
\item $\Eds [m_j(f_{\theta^*}(\Yt)_j - Y_j)| \Yt_j = 0 \cap Y_j \neq 0 ]$: When $Y_j \neq 0$, it is not zeroed by the mask, so $m_j = 1$:
\begin{multline*}
\Eds [m_j(f_{\theta^*}(\Yt)_j - Y_j)| \Yt_j = 0 \cap Y_j \neq 0 ] \\ = \Eds [f_{\theta^*}(\Yt)_j - Y_j| \Yt_j = 0 \cap Y_j \neq 0 ].
\end{multline*}
Further, since $Y_j = Y_{0,j}$ when $Y_j \neq 0$ by the measurement model, 
\begin{multline*}
\Eds [f_{\theta^*}(\Yt)_j - Y_j| \Yt_j = 0 \cap Y_j \neq 0 ] \\ = \Eds [f_{\theta^*}(\Yt)_j - Y_{0,j}| \Yt_j = 0].
\end{multline*}
\end{itemize}    
Substituting the above results in to  \eqref{eqn:al} gives
    \begin{multline}
     \Eds [(1 - \mt_j)m_j (f_{\theta^*}(\Yt)_j - Y_j)| \Yt_j = 0] \\ 
    =  \Eds [f_{\theta^*}(\Yt)_j - Y_{0,j}| \Yt_j = 0 ] \cdot (1 - k_j). \label{eqn:app_B_cs_2}
\end{multline}

\textit{Combining Cases 1 and 2}: A correct expression for $\Eds [(1- \mt_j) m_j (f_{\theta^*}(\Yt)_j - Y_j)| \Yt_j = 0]$ must be true for both Case 1 and 2, so consistent with both \eqref{eqn:app_B_cs_1} and \eqref{eqn:app_B_cs_2}. Consider the candidate
\begin{multline}
\Eds [(1- \mt_j)m_j (f_{\theta^*}(\Yt)_j - Y_{0,j})| \Yt_j] \\ = (1 - k_j) (1 -\mt_j m_j)  \Eds [f_{\theta^*}(\Yt)_j - Y_{0,j}| \Yt_j]. \label{eqn:app_b_comb}
\end{multline}
Eqn. \eqref{eqn:app_b_comb} is consistent with \eqref{eqn:app_B_cs_1} because $(1 -\mt_j m_j) = 0$ when  $ \Yt_j \neq 0$, and consistent with \eqref{eqn:app_B_cs_2} because $(1 -\mt_j m_j) = 1$ when $\Yt_j = 0$. Using the vector form of \eqref{eqn:app_b_comb} and setting $\Eds [f_{\theta^*}(\Yt) | \Yt] = f_{\theta^*}(\Yt)$ gives 
\begin{multline}
	\Eds [(\eye - \Mt)\Mo(f_{\theta^*}(\Yt) -Y) | \Yt] \\ = (\eye - K)(\eye -\Mt \Mo) ( f_{\theta^*}(\Yt) - \Eds [Y_0 | \Yt] )  = 0, \label{eqn:1mk}
\end{multline} 
as required.
%If $p_j \neq 0$ and $\pt_j \neq 1$ for all $j$, $(\eye - K)$ is invertible and 
%\begin{align*}
%	(\eye -\Mt \Mo) f_{\theta^*}(\Yt) = (\eye -\Mt \Mo)  \Eds [Y_0 | \Yt], 
%\end{align*}
%as required.  
\end{proof}

\bibliographystyle{ieeetr}
\bibliography{library_manu}

\end{document}